\let\OLDthebibliography\thebibliography
\renewcommand\thebibliography[1]{
  \OLDthebibliography{#1}
  \setlength{\parskip}{0pt}
  \setlength{\itemsep}{0pt plus 0.3ex}
}
\newtheorem{theorem}{Theorem}[section]
\newtheorem{proposition}[theorem]{Proposition}
\newtheorem{lemma}[theorem]{Lemma}
\newtheorem{corollary}[theorem]{Corollary}
\theoremstyle{definition}
\newtheorem{definition}[theorem]{Definition}
\newtheorem{example}[theorem]{Example}
\newtheorem{remark}[theorem]{Remark}
\newtheorem{notation}[theorem]{Notation}
\title{On the  stability of persistent entropy and new summary functions  for Topological Data Analysis}
\author{N. Atienza, R. Gonzalez Diaz, M. Soriano Trigueros}
\def\then{\Rightarrow}
\def\nn{\nonumber}
\def\RR{\mathbb{R}}
\def\ee{\varepsilon}
\def\sumri{\sum_{i=1}^{n_{r}}}
\def\sumi{\sum_{i=1}^{n_{p}}}
\def\suma{\sum_{i=1}^{n_{a}}}
\def\zabri{w_{r,i}}
\def\zabpi{w_{p,i}}
\def\zabqi{w_{q,i}}
\def\Im{\mbox{Im}\,}
\title{On the  stability of persistent entropy and new summary functions  for Topological Data Analysis}
\date{}
\begin{document}
	\maketitle
		\begin{center}
		\centering Department of Applied Mathematics I, University of Seville\\
		\{natienza, rogodi, msoriano4\}@us.es
		\end{center}
	\begin{abstract}
		Persistent homology and persistent entropy have  recently become useful tools for patter recognition. In this paper, we find requirements under which persistent entropy is stable to small perturbations in the input data and scale invariant. In addition, we  describe two new stable summary functions combining persistent entropy and the Betti curve. Finally, we use the previously defined summary functions  in a material classification task to show their usefulness in machine learning and pattern recognition.
	\end{abstract}
	
	\section{Introduction}

		Topological data analysis (TDA) uses computational topology tools to study datasets. Intuitively, topological features like homology can be seen as qualitative geometric properties related to the notions of proximity and continuity and, therefore, can be useful tools for pattern recognition \cite{mrozek}.  
		TDA has become a large field of research, with persistent homology (and its precursor known as  size functions \cite{sizefun}) as its key tool. It has 
	 been applied successfully  in many areas (see, for example, \cite{ferry}).
		Its standard workflow is the following (see also   Figure~\ref{figure:1}): 
		\begin{enumerate}
			\item Start with a dataset, for example, a point cloud, endowed with some notion of proximity (usually a metric).
			\item Depending on the kind of information we want to obtain, build a simplicial complex and a filter function on it. Compute a nested sequence of increasing subcomplexes (which encapsulate features from data) using the filter function.
			\item Compute the homology of each subcomplex (intuitively, homology captures the ``holes'' of the underlying space) and study how it evolves in the sequence, leading to the key concept of persistent homology.
		\end{enumerate}

Persistent homology can be compactly represented using  persistence barcodes \cite{barcodes},  diagrams \cite{diagrams}
		and, more recently, landscapes \cite{Landscape}. There exist stability results showing that these representations are robust under small perturbations of the given data (see, for example, \cite{Computational}).
In addition, there are numerous software packages to calculate  persistent homology and its representations.
		A nice study of the performance of  available software packages  is made in \cite{Review}.

		\begin{figure}[ht]
			\centering
			\[
				\begin{tikzcd}[column sep=tiny]
				\includegraphics[scale=0.2]{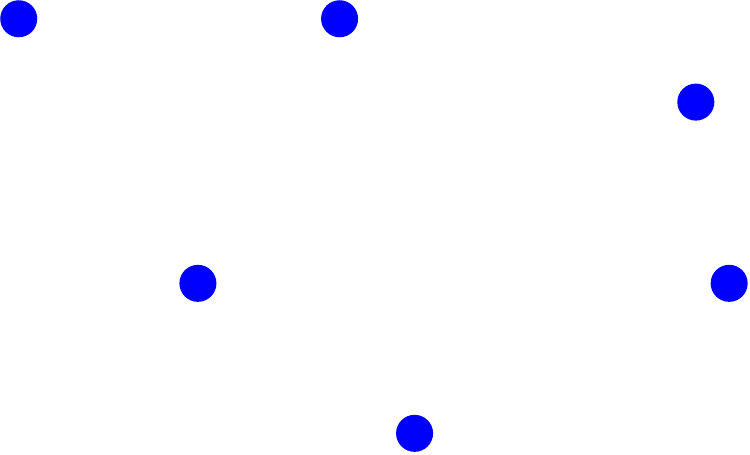}\arrow{dr}&&
				\includegraphics[scale=0.35]{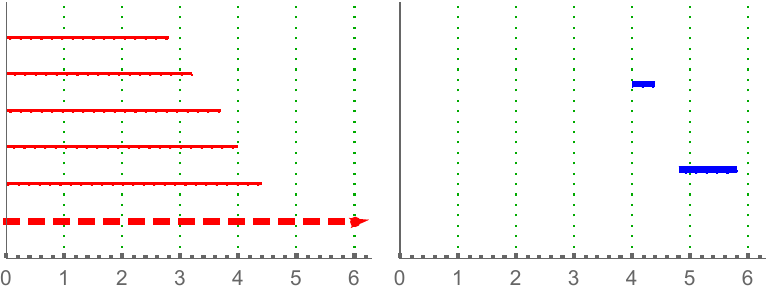} \\		&\includegraphics[scale =0.3]{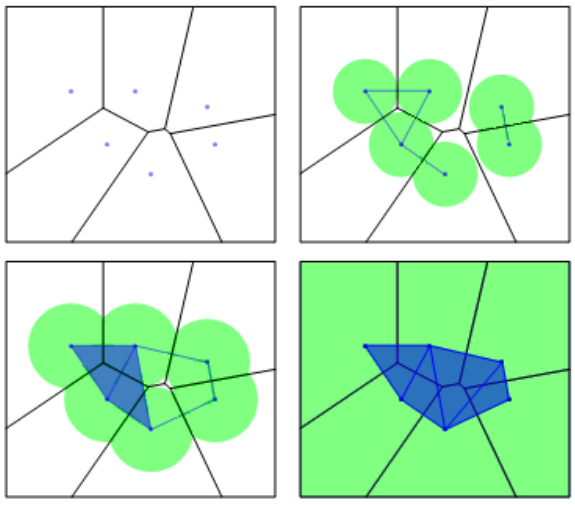} \arrow{ur}&     
			\end{tikzcd}
			\]
			\caption{Standard workflow in topological data analysis.}
			\label{figure:1}
		\end{figure}
		
Although persistence barcodes,  diagrams, and landscapes are metric spaces 
used to compare persistent homology of  datasets, persistence barcodes and diagrams do not work properly for statistical analysis. For example, they fail to have unique mean (see \cite{wa-sta}). Persistence landscapes perform better  \cite{Landscape}, but they are limited to the context of probability in Banach spaces. 
It is more useful sometimes to summarize the information contained in persistent homology using only a number. 
It becomes especially appropriate when only small samples are available since univariate non-parametric tests are required in these cases.
Persistent entropy seems to be a perfect candidate to summarize  persistent homology using only a number. Specifically, persistent entropy is the Shannon entropy \cite{Shannon} of a probability distribution obtained from persistent homology.
It was defined in its current form in \cite{Persistent-entropy} but a precursor of this definition appears in \cite{Barcodes-entropy}. Some successful applications of persistent entropy have been developed for pattern recognition of signals \cite{Eliptic,Piecewise}, complex systems \cite{Complex-system}, biological images \cite{cells} and clustering \cite{cluster}. A more theoretical approach allows   persistent entropy to be used to distinguish topological features from noise \cite{Noise}.
With regards to implementation, persistent entropy has already been implemented as a method in Gudhi library\footnote{
   https://github.com/GUDHI/gudhi-devel/blob/master/src/python/gudhi/\\representations/vector\_methods.py },  scikit-TDA library\footnote{https://github.com/scikit-tda/persim} and 
   Giotto library\footnote{https://github.com/giotto-ai/giotto-learn/blob/master/giotto/diagrams/features.py}.
Some partial results about stability of persistent entropy have been given in \cite{Piecewise,Noise} but, as far as we know,  no formal study of persistent entropy has been done.
The main objective of this paper is to provide a general stability result for persistent entropy and to study under which conditions persistent entropy is scale-invariant.
		
When it is not necessary 		 to find significant differences in data but a classification task is needed, the usual approach is to replace 		statistical tests with machine learning methods. In this case, 		summarizing persistent homology in a number may be too restrictive, since we are projecting an infinite-dimensional space (persistence barcodes) to only one dimension (persistent entropy). One solution might be to use summary functions instead.
Common approaches to
summarizing persistence barcodes  include kernel functions such as 
persistence multi-scale  kernel \cite{multi-scale},
persistence weighted Gaussian kernel
\cite{32} and sliced Wasserstein kernel \cite{Oudot}, as well as persistence-diagram vectorizations such as the already mentioned persistence landscape, persistence silhouettes \cite{Silhouettes}, persistence images \cite{adams}, Euler characteristic curves
\cite{euler}, 
topological intensity maps 
\cite{cosmic} and Betti curves \cite{2019,Umeda_2017}.
In this paper, we will define two new stable summary functions based on persistent entropy that can be used as a complementary function to the previous ones to describe persistence barcodes.
		
The paper is organized as follows. After recalling the theory of persistent homology in Section 2, stability and  scale-invariance of persistent entropy is introduced in Section 3.
In Section 4, we define two new summary functions derived from the concept of persistent entropy and study also their stability. Examples showing the applicability of these functions are also given. The usefulness of the summary functions defined in this paper is showed in Section 5. The paper ends with a section devoted to conclusions and future work.


\section{Background}
		
In this section, we give a quick overview of how algebraic topology is applied to data analysis.
An instructive book showing the main algebraic topology tools for data analysis is \cite{Computational}. 
		
As explained in the introduction, in order 	to apply algebraic topology tools to data analysis, 	we first must summarize the information provided by the data in a combinatorial structure, the simplicial complex structure being the most commonly used. 
Recall that an $n$-simplex is the convex hull of $n+1$ affinely independent points.
A $0$-simplex is a point, a $1$-simplex is a segment, a $2$-simplex is a triangle, a $3$-simplex is a tetrahedron and so on. A simplicial complex is a set of simplices glued in a specific way.
An abstract simplicial complex can be seen as a way of storing  the  combinatorial structure of a simplicial complex.
		
		\begin{definition}[abstract 
		simplicial complex]
			Let $X$ be a finite set. A family $K$ of subsets  of $X$ is an abstract \emph{simplicial complex} if for every subsets $\sigma \in K$ and $\sigma' \subseteq X$, we have that $\sigma' \subset \sigma$ implies $\sigma' \in K$ (in other words, non-empty intersections of simplices in $K$ are also simplices of $K$). A subset in $K$ of $m+1$ elements of $X$ is called an $m$-simplex. 
		\end{definition}
When the finite set $X$ represents data, the geometrical structure of its associated simplicial complex can provide information about how the data is related. Usually, these relations are not equally significant 	so it is common to define an order in its simplices to represent their importance. This can be done implicitly using a filter function.
		
		\begin{definition}[filtration]
			A  \emph{filter function} on a simplicial complex $K$ is a monotonic function $f : K \rightarrow \RR$
			satisfying that
			$\sigma' \subset \sigma$ implies $f(\sigma') \leq f(\sigma)$. A \emph{filtration} on $K$, obtained from $f$, is the sequence of subcomplexes $\big( K_t\big)_{t\in \RR}$ where $K_t = f^{-1}(-\infty,t]$. 
		\end{definition}
		Notice that, because of the monotonicity of $f$, 
		the set $K_t$ is a simplicial complex for all $t$, and $t_1 < t_2$ implies that  $K_{t_1} \subseteq K_{t_2}$. To help intuition, the parameter $t$ will be referred as {\it time} although its physical meaning may be completely different. The following definition is an example of filtration and requires $X$ to be a metric space.
		
		\begin{definition}[Vietoris-Rips filtration]
			Let $X$ be a 
			finite set of points  endowed with a distance $d_X$.
			The Vietoris-Rips filtration of $X$
			is the sequence $\big( Rips(X,t)\big)_{t\in \RR}$ obtained from the filter function 
			$$f([x_0,\ldots,x_m]) = \max_{0\leq i,j\leq m} d_X(x_i,x_j)$$
			where, for each $t\in \RR$, the simplices of the Vietoris-Rips simplicial complex $Rips(X,t)$ are defined as:
			\begin{equation*}
			\sigma = \langle x_0, \ldots, x_m\rangle\in Rips(X,t) \Longleftrightarrow  f([x_0,\ldots,x_m]) \leq t. 
			\end{equation*}
		\end{definition}

		Homology groups of simplicial complexes provide a formal interpretation of what an $n$-dimensional ``hole'' is. Intuitively, 
		a $0$-dimensional hole is a  connected component,
		a $1$-dimensional hole  is a loop, a $2$-dimensional hole is a cavity, and so on.
		Given a simplicial complex $K$, an $m$-chain $c$ is a formal sum of $m$-simplices of $K$. 
		That is,
		$c = \sum_{i=1}^k a_i \sigma_i$ where, for $1\leq i\leq k$, $\sigma_i$ is an $m$-simplex of $K$ and 
		$a_i$ is a  coefficient in an unital ring $R$.
		To relate the $m$-chains of a given simplicial complex $K$ with its $m$-dimensional holes, we need the boundary operator $\partial_m$:
		If $\langle x_0,\ldots,x_m\rangle$ is an $m$-simplex of $K$ then,
		\begin{equation*}
		\partial_m (\langle x_0,\ldots,x_m\rangle) = \sum_{i=0}^m(-1)^i \langle x_0,\ldots,x_{i-1},x_{i+1},\ldots,x_{m}\rangle.
		\end{equation*}
		We can extend this definition to any $m$-chain by linearity. Notice that $\partial_{m-1} \circ \partial_{m} = 0$ or, in other words, the boundary of a boundary is null. 
		The $m$-dimensional holes of $K$ are detected from $m$-chains whose boundary is zero without being  ``boundaries'' themselves. 
		More concretely, the $m$-dimensional homology group of $K$ is defined as the quotient group
	\begin{equation*}
		H_m(K) = \dfrac{\emph{Ker } \partial_m}{\emph{Im } \partial_{m+1}},
		\end{equation*}
		and its $m$-dimensional Betti number as
		$\beta_m = \emph{rank } H_m(K)$. 
		Intuitively, $\beta_0$ 
		counts the number of independent connected components of $K$, $\beta_1$ the number of independent loops, and so on. 
		
		When computed over a field, the homology groups is actually a vector space. This fact allow us tu use persistent homology to study filtrations. 
		
		\begin{definition}[persistent homology]
		Let $\mathcal{F}=\big(K_t\big)_{t\in\RR}$ be a filtration.
		Suppose the ground ring $R$ is a field and, therefore,  for each $t\in\RR$ and $m\in \mathbb{Z}$,
	the $m$-dimensional homology group $H_m(K_t)$  is a vector space.
		For every $a< b$ and $m$, consider the linear maps $v_m^{a,b}:H_m(K_a)\rightarrow H_m(K_b)$ induced by the inclusion $K_a \hookrightarrow K_b$. 
		The $m$-th \emph{persistent homology groups} are the images of the linear maps $v_m^{a,b}$, denoted by 
		$\Im v_m^{a,b}$. The set $\{\Im v_m^{a,b}\}_{a<b}$
			is called the $m$-th \emph{persistent homology} of the  filtration $\mathcal{F}$ and is denoted by $\mathcal{H}_m$.
		\end{definition}
We  assume that the rank of $H_m(K_t)$ is finite for all $t\in \RR$ and $m\in \mathbb{Z}$. In this case, persistent homology can be compactly represented via persistence barcodes (or diagrams). 
		
			\begin{figure}[t!]
				\begin{center}
					\begin{tabular}{ |c||c| }
						\hline\multicolumn{2}{|c|}
						{\hspace{0.1cm}
							\includegraphics [width=0.7\textwidth] {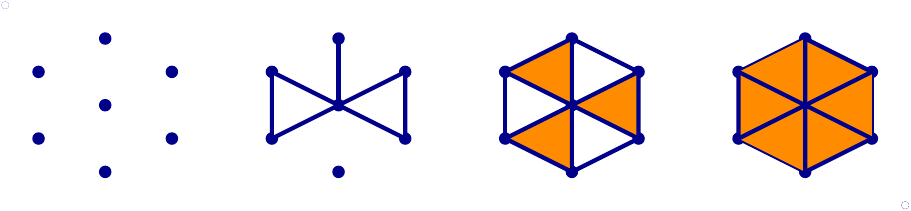}
							\hspace{0.1cm}}
						\\
						\hline
						\hline	
						\includegraphics [width=0.35\textwidth]{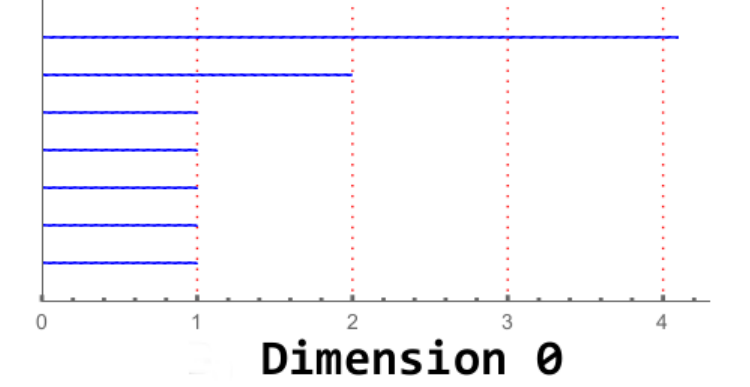}
						&
						\includegraphics [width=0.35\textwidth]{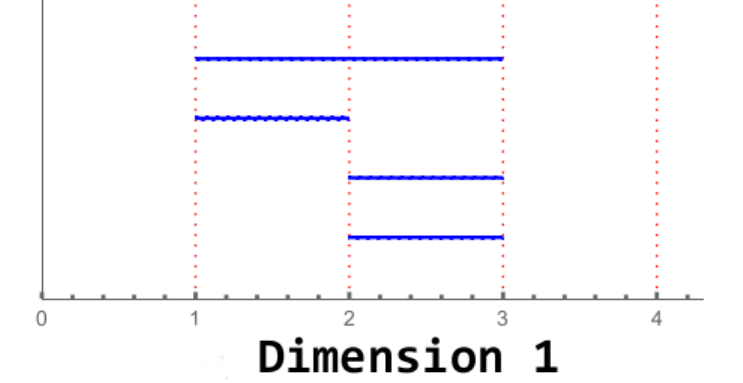}
						\\
						\hline
					\end{tabular}
				\end{center}
				\caption{ Top: example of a filtration $\mathcal{F}$. Bottom: $0$-th and $1$-st persistence  barcodes of $\mathcal{F}$.}\label{figure:fig_bar}
			\end{figure}

		\begin{definition}[persistence  barcodes] 
	Let	 $\mathcal{H}_m$ be the $m$-th persistent homology of a filtration $\mathcal{F}$.
		For $a<b$ and $m\in\mathbb{Z}$, define $\mu_m^{a,b} = \big(\emph{rank }(\Im v_m^{a,b-1}) - \emph{rank }(\Im v_m^{a,b})\big)$ $-\big(\emph{rank }(\Im v_m^{a-1,b-1}) - \emph{rank }(\Im v_m^{a-1,b})\big)$ that can be interpreted as the number of $m$-dimensional homology classes which are ``born'' at time $a$ and ``die'' at time $b$.
		Then,  $\mathcal{H}_m$	can be represented by  the multiset\footnote{A multiset is a set whose elements can be repeated} of intervals $\big\{[x_i,y_i)\big\}_{1\leq i\leq n}$,
		called the $m$-th \emph{persistence barcode} or diagram of $\mathcal{H}_m$,
		where each interval $[x_i,y_i)$ appears $\mu_m^{x_i, y_i}$ times.
		\end{definition}
In this paper, we assume barcodes have a finite number of elements, find an example in Figure~\ref{figure:fig_bar}. We introduce now the notation  used along the paper.
		\begin{notation}\label{notation}
		Let $\mathcal{B}$ denote the set of persistence barcodes. 
		Given a persistence barcode $A\in \mathcal{B}$, 
		its $n_a$ intervals will be denoted by $[x^a_i,y^a_i)$ for $1\leq i\leq n_a$. Besides, the length of  $[x^a_i,y^a_i)$ will be denoted by $\ell^a _i$, that is, $\ell^a_i=y^a_i-x^a_i$. Finally, $L_a$ will denote the sum $\suma\ell^a_i$.
		Moreover, given two persistence barcodes 
	$A$ and $B$, denote $\max\{n_a,n_b\}$ by $n_{max}$ and
 $\max\{L_a,L_b\}$ by 
	 $L_{\max}$. 
		\end{notation}
Let us define the following  subsets of  $\mathcal{B}$. 
		\begin{definition}
		The set of finite persistence barcodes is defined as:
		\[
		\mathcal{B}_F = \{ A\in \mathcal{B}\;\mbox{ such that } \; y^a_i < \infty \,\mbox{ for all }[x_i^a,y_i^a) \in A \}.
	\]
	The set of persistence barcodes whose intervals all start at $0$ is denoted as $\mathcal{B}_0$, that is: 
	\[
		\mathcal{B}_0 = \{ A\in \mathcal{B} \;\mbox{ such that } \; x^a_i = 0 \,\mbox{ for all } [x^a_i,y^a_i) \in A\}
	\]
	And, finally,  the set of normalized persistence barcodes is defined as:
	\[
		\mathcal{B}_N = \Big\{ A\in \mathcal{B} \;\mbox{ such that } \; \suma \ell^a_i = 1\Big \}.
	\]
	\end{definition}
In the sequel, we will assume that $n_{a}>1$ for all $A\in \mathcal{B}_F$ to avoid degenerate cases. There is a  correspondence  between persistence barcodes in $\mathcal{B}_F$  and persistence barcodes in $\mathcal{B}_0 \cap \mathcal{B}_N$. 
	\begin{definition}
	Let $\psi: \mathcal{B}_F \rightarrow \mathcal{B}_0 \cap 			\mathcal{B}_N $  be the projection defined as the  composition: $\psi = \phi \circ \pi$  where $\phi$ and $\pi$ are defined as follows
(see Figure~\ref{figure:fig_comp}):
	\begin{align*}
		& \phi: \mathcal{B}_F \rightarrow \mathcal{B}_N 
		\;\mbox{ where }     
	A=\big\{[x_i^a,y_i^a) \big\}_{1\leq i\leq n_a} \mapsto \phi(A)=\left\{\left[ \frac{x^a_i}{L_a},\frac{y^a_i}{L_a}\right) \right\}_{1\leq i\leq n_a}
	\\
		&  \pi : \mathcal{B}_F \rightarrow \mathcal{B}_0  \;\mbox{ where }
		A=\big\{ [x^a_i,y^a_i) \big\}_{1\leq i\leq n_a} \mapsto  \pi(A)= \big\{[0, \ell_i^a) \big\}_{1\leq i\leq n_a}
	\end{align*}
	\end{definition}

			\begin{figure}[ht]
		\begin{center}
			\begin{tikzcd}[column sep=tiny, row sep=tiny]
				\includegraphics[width=0.29\textwidth]{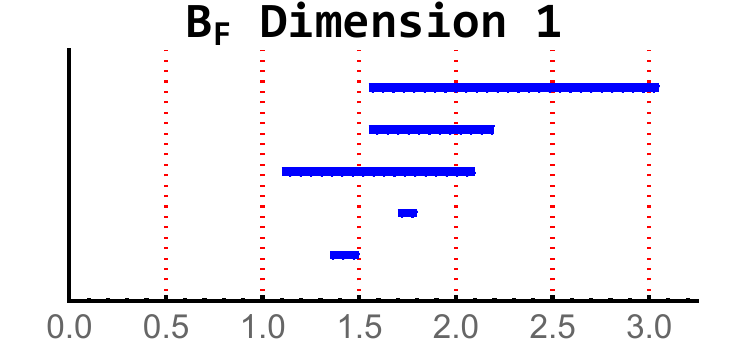}\arrow[ bend right=-25]{rr}{\boldsymbol{\psi}}\arrow[ bend right=10]{dr}{\boldsymbol{\pi}}&&
				\includegraphics[width=0.29\textwidth]{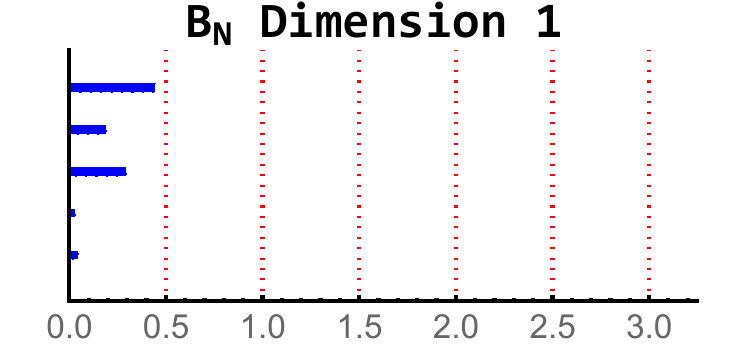}\\			&\includegraphics[width=0.29\textwidth]{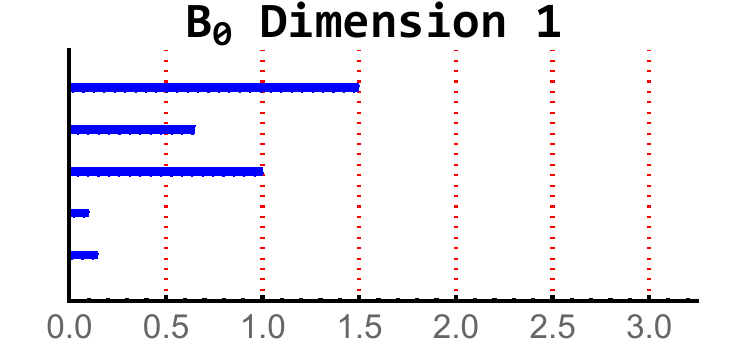}\arrow[ bend right=10]{ur}{\boldsymbol{\phi}}&     
			\end{tikzcd}
		\end{center}
		\caption{Example of projections $\pi$, $\phi$ and $\psi$.}\label{figure:fig_comp}
	\end{figure}
	
		The following 
		metrics can be defined on $\mathcal{B}$.

		\begin{definition}[Wasserstein and bottleneck distances]\label{Was-definition}
			Let $A,B \in \mathcal{B}$ and $1\leq p < \infty$. 
			Define the \emph{$p$-th Wasserstein  distance} as
			\begin{equation*}
			d_p(A,B) = \left( \min_{\gamma}\sum_{i=1}^{n_{\gamma}} \max \big\{ |x_i^a - x_{\gamma(i)}^b|^p, |y_i^a - y_{\gamma(i)}^b| ^p\big\}\right)^{\frac{1}{p}}
			\end{equation*}
			where  $\gamma$ is any bijection between the multisets 
			 $A=\{[x_i^a,y_i^a)\}_{1\leq i\leq n_a}$ and $B=\{[x_i^b,y_i^b)\}_{1\leq i\leq n_b}$ (including, if necessary, intervals $[t,t)$ of zero length) and $n_{\gamma}$ is the cardinality\footnote{Since $\gamma$ is a bijection, cardinality of $\gamma$ refers to the number of elements of the domain of $\gamma$ which coincides with the number of elements of the image of $\gamma$.} of $\gamma$. 
			\\
			The limit case $p=\infty$ is called the {\it bottleneck distance} and is defined by  
			\begin{equation*}
			d_\infty(A,B) =  \min_{\gamma}\max_{i 
			}  \max \big\{ |x_i^a - x_{\gamma(i)}^b|, |y_i^a - y_{\gamma(i)}^b| \big\}  .
			\end{equation*}
		\end{definition}
		
Observe that $n_{\max}\leq n_{\gamma}\leq n_a+n_b$.
Besides, in case $y_i^a$ or $y_{\gamma(i)}^b$ is $\infty$ then $|y_i^a - y_{\gamma(i)}^b|$ is set to $\infty$. 
In case both $y_i^a$ and $y_{\gamma(i)}^b$ are $\infty$ then $|y_i^a - y_{\gamma(i)}^b|$  is set to $0$.
Notice also that we have replaced the $\inf$ and $\sup$ terms of  the original definition of Wasserstein and bottleneck distance \cite[p.~180-183]{Computational} by  $\min$ and $\max$ terms because, in this paper,  persistence barcodes have always a finite number of intervals.

We finish this section with some well-known persistent homology stability results, supporting the idea that an algorithm designed using persistent homology tools will produce ``similar'' outputs for ``similar'' inputs. 
		
		\begin{theorem}[\cite{Wstability}]\label{Est1}
			Let $f,g\, : \, X \rightarrow \RR$ be two tame\footnote{The function $f$ is \emph{tame} if 
there is a finite number of different elements 
in the set
$\{H_m(f^{−1}(−\infty, a]))\}_a$
and such set consists of homology groups
whose ranks are finite.} Lipschitz functions on a metric space $X$ whose triangulations grow polynomially with constant exponent $j\geq 1$. Then, there are constants $c\geq 1$ and $k\geq j$ such that the $p$-th Wasserstein distance between their corresponding 	persistence barcodes, denoted by  $A$ and $B$, satisfies:
			\begin{equation*}
			d_p(A,B)\leq c\,||f - g ||_{\infty}^{1-\frac{k}{p}}\;\;
			\mbox{ for every $p\ge k$.}
			\end{equation*}		
		\end{theorem}
		When $p=\infty$, the constant $c$ is no longer necessary, obtaining the following most commonly used simplified version.
		\begin{corollary}[\mbox{\cite[p.~183]{Computational}}]\label{cor:d}
			Let $K$ be a simplicial complex and  $f,g : K \rightarrow \RR$ be two monotonic functions. If $A$ and $B$ denote 
			the corresponding
				persistence barcodes obtained from $f$ and $g$,   then
			\begin{equation*}
			d_\infty(A,B) \leq ||f-g||_\infty.
			\end{equation*}
		\end{corollary}
		
		Finally, as a consequence of Theorem \ref{Est1}, 
		we can assert the following.
		
		\begin{theorem}[\cite{Chazal-figure}]\label{Est2}
			Consider two finite metric spaces $(X,d_X)$, $(Y,d_Y)$. Let $A,B$ be the two 
			persistence barcodes obtained, respectively, from $ Rips(X,t)|_{t\in\RR} $ and  $Rips(Y,t)|_{t\in\RR}$.  Then,
			$$d_\infty(A,B) \leq d_{GH}(X,Y)$$
			where $d_{GH}$ denotes the Gromov-Hausdorff (GH) distance\footnote{The Gromov-Hausdorff distance between
$X$ and $Y$ is
 $\inf_{\gamma_X,\gamma_Y}$ $
d_H^Z(\gamma_X(X),\gamma_Y(Y))$
where
$d_H^Z(\gamma_X(X),\gamma_Y(Y))$ is the Hausdorff distance between $\gamma_X(X)$ and $\gamma_Y(Y)$
and 
$\gamma_X,\gamma_Y$  range over all the isometric embeddings of $X,Y$
into some same metric space $(Z,d_Z)$.
}.
		\end{theorem}

Looking at these results, we can conclude that stability results are simpler when using  the  bottleneck distance than when using  the  Wasserstein distance.
		

	\section{Stability of persistent entropy}
	
This section aims to show  under which conditions persistent entropy is stable, which means that it is uniformly continuous or,  more informally, there is a bound that ``controls'' the perturbation produced by noise in the input data.
	In the first subsection, we recall the definition of persistent entropy. Later, we provide several lemmas that will be needed to prove the stability of persistent entropy for finite persistence barcodes. Lastly, we will see how we can project persistence barcodes with infinite length intervals to finite persistence barcodes in a stable  way. These projections will allow to provide general stability results for persistent entropy. 
	
	\subsection{Persistent entropy}
	
So far, we have seen how persistent homology can be represented using persistence barcodes in a stable way. Nevertheless, sometimes, we might prefer to use only a number to summarize persistent homology (such as
 persistent entropy),  even if we are losing information by
 doing so.
	
	\begin{definition}[persistent entropy \cite{ Barcodes-entropy,Persistent-entropy}]\label{Entropy-def}
	The {\it persistent entropy} $E(A)$ of a persistence barcode $A = \big\{ [x^a_i,y^a_i)\big\}_{1\leq i\leq n_a}$ in 	$\mathcal{B}_F$ is defined as:
		\begin{equation*}
		E(A) = - \suma 
		\frac{\ell_i^a}{L_a}\log\left( \frac{\ell_i^a}{L_a} \right).
		\end{equation*}  
	\end{definition}  

For simplicity of notation, $\log$ will refer to the $\log$-base-2 function.
Observe that, to compute persistent entropy, we only have to consider the  length $\ell_i^a$ of each interval
$[x^a_i,y^a_i)$. 		
The following immediate result holds.
		\begin{remark}\label{ent_pro}
			If $A\in \mathcal{B}_F$ then
			$E(\psi(A)) = E(A)$.
		\end{remark}
	
	Let us see now a naive example of application of persistent entropy.

	\begin{example}
		Suppose we have 20 point clouds: 10 point clouds following a normal distribution and 10 point clouds following a uniform distribution (see Figure~\ref{figure:example}). 
		Note that since the sample is small, we should not perform a multivariate statistical test, so the idea is to perform  univariate statistical tests using persistent entropy.
	Let us compute the $1$-st persistent homology  using the Vietoris-Rips filtration. Observe that the computed persistence barcodes will never have  infinite length intervals since Vietoris-Rips complexes are always contractible from a (large enough) value. 
		Now,  let us compute the persistent entropy of each persistence barcode to obtain a number for each of the point clouds. Let us set $\alpha = 0.05$ and perform the Mann–Whitney U test\footnote{See \cite{statistics_web} for a simple introduction to statistical tests.}. We obtain a  $p$-value  of $p=0.046$ for this experiment so $p<\alpha$ and we can conclude that there are significant differences between the point clouds.
		
        \begin{figure}[ht]
        \centering
        \includegraphics[width = .75\textwidth]{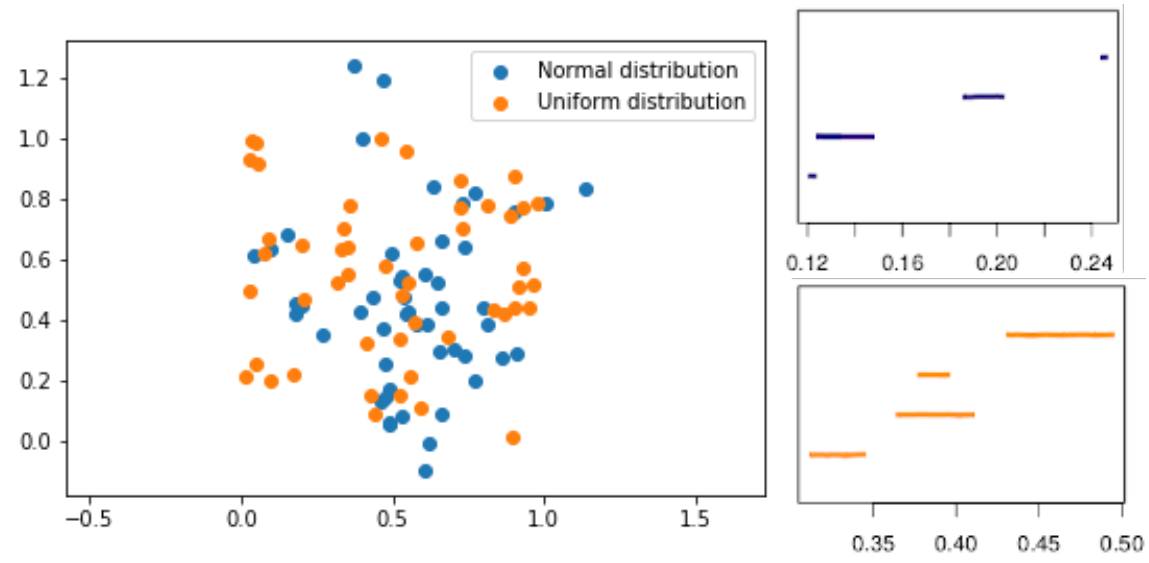}
			\caption{Left: in blue, a point cloud $X$ following a normal distribution; in orange, a point cloud $Y$ following a uniform distribution. Right: 1-th persistence barcode  of the Vietoris-Rips filtration associated to $X$ (top) and to $Y$  (bottom).}\label{figure:example}
        \end{figure}
	\end{example}
	Note that, in the definition of persistent entropy, we assume that there are no infinite length intervals in the persistence barcode.  We will study  in Subsection \ref{subsec:infinity} how to proceed when 
  infinite length intervals appear.

	\subsection{Preliminary lemmas}
	
	In this subsection, we will provide several results  useful to prove the main results in this paper that will be given in Subsection~\ref{subsec:stabilityresults}.

	Let us  recall a well-known result regarding  $p$-norms.
	\begin{remark}\label{p-inequality}
		Let $z\in \mathbb{R}^n$ and $p,q\in\mathbb{R}$. 
		Let $||z||_p=\left( \sum_{i=1}^n |z_i|^p \right)^{\frac{1}{p}}$
		and $||z||_\infty = \max_i \{ |z_i| \}$. 
		If $1 \leq q < p \leq \infty$ then 
		$||z||_p \leq ||z||_q  \leq  n^{\frac{1}{q} - \frac{1}{p}} ||z||_p$.
	\end{remark}
	
The following result extends  Remark \ref{p-inequality} 	to  the Wasserstein distance.

	\begin{lemma}\label{w-inequality}
		Let $d_p$ be the $p$-th Wasserstein distance for persistence barcodes. If $A,B \in \mathcal{B}_F$ and $1 \leq q < p \leq \infty$ then
		$$d_p(A,B) \leq d_q(A,B) \leq (n_p)^{\frac{1}{q}-\frac{1}{p}}d_p(A,B).$$
	\end{lemma}

	\begin{proof}
		For $r=p,q$, let $\gamma_r$ denote a bijection 
		where $d_r(A,B)$ is reached, that is,
		\begin{equation*}
		d_r(A,B)=\Big(\sumri \max \big\{ |x_i^a - x_{\gamma_r(i)}^b|^r, |y_i^a - y_{\gamma_r(i)}^b|^r\big\}\Big)^{1/r}=\Big(\sumri \big(\zabri\big)^r\Big)^{1/r}
		\end{equation*}
	where	$n_r$ is the cardinality of  $\gamma_r$ and $\zabri=\max \big\{ |x_i^a - x_{\gamma_r(i)}^b|, |y_i^a - y_{\gamma_r(i)}^b|\big\}$. 
Observe that, by definition of  $d_q$ and applying Remark~\ref{p-inequality}, we have
\begin{equation}
d_q(A,B) \leq \left(\sumi \big(\zabpi\big)^q\right)^{\frac{1}{q}}\leq (n_p)^{\frac{1}{q} - \frac{1}{p}} \left(\sumi \big(\zabpi\big)^p \right)^{\frac{1}{p}}. \end{equation} 
	Therefore, $d_q(A,B)\leq (n_p)^{\frac{1}{q} - \frac{1}{p}}d_p(A,B)$.
	Besides, by definition of  $d_p$ and again by Remark~\ref{p-inequality},
\begin{equation*}
d_p(A,B) 
\leq
		\left(\sum_{i=1}^{n_q} \big(
	\zabqi\big)^p
		\right)^{\frac{1}{p}}
		\leq \left(\sum_{i=1}^{n_q} \big(\zabqi\big)^q \right)^{\frac{1}{q}}=d_q(A,B)
		\end{equation*}
		  concluding that
		 $d_p(A,B)\leq d_q(A,B)$.			
	\end{proof}
	
The  result below states that when we translate the intervals of  given persistence barcodes $A$ and $B$ to the origin by projection $\pi$, the distance between them can be doubled.

	\begin{lemma}\label{0-inequality}
		If $A,B \in \mathcal{B}_F$ 
		then
	$$d_p(\pi(A),\pi(B)) \leq 2\,d_p(A,B).$$
	\end{lemma}
	\begin{proof}
Let $\gamma_p$ be a bijection where $d_p(A,B)$ is reached.
Let $n_p$ denote the cardinality of $\gamma_p$.
Since  $\pi(A) = \{ [0, 
		\ell_i^a
		) \}_{1\leq i\leq n_a}$ and $\pi(B) = \{[0, 
		\ell_{i}^b
		) \}_{1\leq i\leq n_b}$ then we have:
		\begin{align*}
		&\big(d_p (\pi(A),\pi(B))\big)^p= \min_{\gamma} \sum_{i=1}^{n_{\gamma}}
		\max\big\{ 0,|
			\ell_i^a-	\ell_{\gamma(i)}^b
		|^p  \big\}
		=  \min_{\gamma}\sum_{i=1}^{n_{\gamma}} |
		\ell_i^a-	\ell_{\gamma(i)}^b
		|^p  \\
		&\leq \sum_{i=1}^{n_p} |
		\ell_i^a-	\ell_{\gamma_p(i)}^b
		|^p 
		\leq \sumi \Big( |x_i^a-x_{\gamma_p(i)}^b|+|y^a_i-y^b_{\gamma_p(i)}| \Big)^p 
		\\&
				\leq \sumi  \big( 2\max\big\{ |x_i^a - x_{\gamma_p(i)}^b|, |y_i^a - y_{\gamma_p(i)}^b|\big\} \big)^p
				= 2^p \big(d_p(A,B)\big)^p.
		\end{align*}
	\end{proof}

	  To establish what we consider ``big'' or ``small'' error, we need to normalize the distances between persistence barcodes  in some way.
	
	\begin{definition}[relative error]
	 Let $A,B \in \mathcal{B}_F$ and  $1\leq p \leq \infty.$
	The {\it relative 	error} 	$r_p(A,B)$ 	is defined as:
		\begin{equation*}\label{error}
		r_p(A,B) = 
		\frac{2(n_p)^{1-\frac{1}{p}}}{L_{max}}\; d_p(A,B).
		\end{equation*}
	\end{definition}
Observe  that, according to Lemma~\ref{0-inequality},  it is  satisfied that
$$d_p(\pi(A), \pi(B)) \leq \frac{L_{max}}{(n_p)^{1-\frac{1}{p}}}\; r_p(A,B).$$
The next lemma is a technical result that we will use  later.
\begin{lemma}\label{lema-tecnico}
	Let $\gamma$ be a bijection between the multisets $A,B \in \mathcal{B}_F$. Let $n_{\gamma}$ be the cardinality of $\gamma$. Then 	for all $i$,  $1\leq i\leq n_\gamma$, we have:
		  $$
		\left| \dfrac{\ell_i^a}{L_a} - \dfrac{\ell^b_{\gamma(i)}}{L_b} \right|
			\leq
		\frac{ \big|\ell^a_i - \ell_{\gamma(i)}^b\big|}{
		L_{\max}
		} + \frac{ \ell^b_{\gamma(i)}d_1(\pi(A),\pi(B))}{L_a L_b}.
		$$
	\end{lemma}
	
	\begin{proof}
	Without loss of generality, 
	suppose $L_{\max}=L_a$. Since
	\begin{align}\label{ec1}\left| \dfrac{\ell_i^a}{L_a} - \dfrac{\ell^b_{\gamma(i)}}{L_b} \right|
			=
		\left| \dfrac{\ell_i^aL_b - \ell^b_{\gamma(i)}L_a}{L_aL_b} \right|
		\end{align}
		we consider two cases:
		$\ell_i^aL_b \geq \ell^b_{\gamma(i)}L_a$ and 	$\ell_i^aL_b \leq \ell^b_{\gamma(i)} L_a$. In the first case:
	\begin{equation*}\label{l1eq}
	(\ref{ec1})	=  \frac{\ell_i^aL_b - \ell^b_{\gamma(i)}L_a}{L_a L_b}  \leq  \frac{\ell_i^aL_b - \ell^b_{\gamma(i)}L_b}{L_a L_b} =  \frac{\ell_i^a - \ell^b_{\gamma(i)}}{L_a}.
		\end{equation*}
		For the second case  (i.e., when $\ell_i^aL_b \leq \ell^b_{\gamma(i)}L_a$), use that 
		$L_a\leq L_b + d_1(\pi(A), \pi(B))$ 
		to obtain:
		\begin{align*}
		(\ref{ec1})
		&= \frac{ \ell^b_{\gamma(i)}L_a - \ell_i^aL_b}{L_a L_b} 
		\leq  \frac{ \ell^b_{\gamma(i)} \left( L_b + d_1(\pi(A),\pi(B)) \right) - \ell_i^aL_b}{L_a L_b} \\
		&=  \frac{ \ell^b_{\gamma(i)} - \ell_i^a}{L_a} +  \frac{ \ell^b_{\gamma(i)}d_1(\pi(A),\pi(B))}{L_a L_b}.
		\end{align*}
	\end{proof}
	
	Let us now see  how the projection $\psi$ affects the relationship between the 
relative error $r_p$ and the
distance $d_1$.
	\begin{lemma}\label{Projection-inequality}
		If $A,B \in \mathcal{B}_F$ and $1\leq p \leq \infty$ then 
		  $$
		d_1(\psi(A),\psi(B)) \leq
		2\,r_p(A,B).
		$$
	\end{lemma}
	\begin{proof}
Recall that if 
		$A=\big\{\left[x_i^a,y_i^a\right)\big\}_{1\leq i\leq n_a}$ and $B=\big\{\left[x_i^b,y_i^b\right)\big\}_{1\leq i\leq n_b}$ then we have:
		$$\pi(A)=\big\{\left[0,\ell_i^a\right)\big\}_{1\leq i\leq n_a}\quad\mbox{ and }\quad \pi(B)=\big\{\left[0,\ell_i^b\right)\big\}_{1\leq i\leq n_b},$$
		$$\psi(A)=\left\{\left[0,\dfrac{\ell_i^a}{L_a}\right)\right\}_{1\leq i\leq n_a}\quad\mbox{ and }\quad \psi(B)=\left\{\left[0,\dfrac{\ell_i^b}{L_b}\right)\right\}_{1\leq i\leq n_b}.$$
		Let $\gamma_{\pi,1}$ be a bijection 
		where $d_1\big(\pi(A),\pi(B)\big)$ is reached, that is:
		\begin{equation*}
		d_1\big(\pi(A),\pi(B)\big)=\sum_{i=1}^{n_{\pi,1}}
		\big|\ell_i^a-\ell_{\gamma_{\pi,1}(i)}^b\big|
		\end{equation*}
		where $n_{\pi,1}$ is the cardinality of  $\gamma_{\pi,1}$. Notice that  $\ell_i^a$ or $\ell_{\gamma_{\pi,1}(i)}^b$ might be $0$  for some $i$ if intervals of zero length were needed for creating  bijection $\gamma_{\pi,1}$.  
		We can assume without loss of generality that $L_{\max} = L_a$. Now by Lemma~\ref{lema-tecnico} we have:
			\begin{align*}
		d_1\big(\psi(A),\psi(B)\big) 
		&\leq \sum_{i=1}^{n_{\pi,1}}
			\left| \dfrac{\ell_i^a}{L_a} - \dfrac{\ell^b_{\gamma_{\pi,1}(i)}}{L_b} \right|\\
&\leq \sum_{i=1}^{n_{\pi,1}} \left( \frac{ \big|\ell^a_i - \ell_{\gamma_{\pi,1}(i)}^b\big|}{L_a} + \frac{ \ell^b_{\gamma_{\pi,1}(i)}d_1(\pi(A),\pi(B))}{L_a L_b} \right)  \\
		&= \frac{d_1(\pi(A),\pi(B))}{L_a}  + \frac{ L_b d_1(\pi(A),\pi(B))}{L_a L_b} = \frac{2d_1(\pi(A),\pi(B))}{L_a}.
		\end{align*}
		Applying Lemma~\ref{0-inequality} we have that
		$
		\dfrac{2d_1(\pi(A),\pi(B))}{L_a} \leq \dfrac{4d_1(A,B)}{L_a}$.\\
			By Lemma~\ref{w-inequality}, we get that
			$
	\dfrac{4d_1(A,B)}{L_a}	\leq \dfrac{ 4(n_p)^{1-\frac{1}{p}}  d_p(A, B)}{L_a}$.\\
			Finally, since we assumed that $L_a=L_{\max}$ then 
		$\dfrac{ 4(n_p)^{1-\frac{1}{p}}  d_p(A, B)}{L_a} = 2r_p(A,B)$.
	\end{proof}
	
	
\subsection{Stability results for $\mathcal{B}_F$}\label{subsec:stabilityresults}
	
Two important results about the stability of persistent homology were recalled in Section 2 (Theorem \ref{Est1} and Theorem \ref{Est2}). 
These results  guarantee that if two filter functions (or two metric spaces) are ``similar'', then  their corresponding persistence barcodes will be ``similar'' as well. 
Besides, there also exist stability results for Shannon entropy defined on  probability distributions. To combine these results to prove stability of persistent entropy we need to adapt the last ones to the metric space of persistence barcodes.
	
First of all, recall that the continuity of persistent entropy with respect to the bottleneck distance is proven in \cite{Noise}.
The following proposition generalizes that result to the  Wasserstein distance.
	
	\begin{proposition}\label{continuity}
		Let $A,B \in \mathcal{B}_F$ and let $d_p$ be the $p$-th Wasserstein distance with 
	$1\leq p \leq \infty$. 
		If we fix  a maximum number of intervals and a minimum  sum of the lengths of the intervals in a persistence barcode, then the persistent entropy $E$ is continuous on $(\mathcal{B}_F,d_p)$:
		\begin{equation*}
		\forall \ee \; \exists \delta \text{ such that }  d_p(A,B) \leq \delta \then |E(A) - E(B)|\leq \ee.
		\end{equation*}
	\end{proposition}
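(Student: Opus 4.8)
The plan is to reduce the statement to the uniform continuity of the classical Shannon entropy on the simplex of probability distributions with at most $n_{\max}$ outcomes, and then to transport that fact back to $\mathcal{B}_F$ by means of the projection $\psi$. First I would use Lemma \ref{ent_pro} to replace $E(A)$ and $E(B)$ by $E(\psi(A))$ and $E(\psi(B))$, where now $\psi(A),\psi(B)\in\mathcal{B}_0\cap\mathcal{B}_N$ are genuine finite probability distributions $\{p_i^a\}$ and $\{p_i^b\}$, padded if necessary with zero-length intervals (which carry probability $0$, so they neither change the entropy, since $0\log 0 = 0$, nor obstruct the comparison) so that both have exactly $n_{\max}$ entries.

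Next, by Lemma \ref{Projection-inequality} together with $L_{\max}\ge L_{\min}$,
\[
d_p\big(\psi(A),\psi(B)\big)\ \le\ \frac{4\,n_{\max}^{1-\frac1p}}{L_{\max}}\,d_p(A,B)\ \le\ \frac{4\,n_{\max}^{1-\frac1p}}{L_{\min}}\,d_p(A,B),
\]
so a small $d_p(A,B)$ forces $\psi(A)$ and $\psi(B)$ to be close as probability distributions; this is precisely the place where the hypotheses ``$n_{\max}$ fixed'' and ``$L_{\min}$ fixed'' are used, the latter to keep the constant in front of $d_p(A,B)$ bounded.

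The core step is then to show that the Shannon entropy is uniformly continuous, with respect to $d_p$, on the set of probability barcodes with at most $n_{\max}$ bars. To do this I would relabel the bars of $\psi(A)$ and $\psi(B)$ so that the bijection realizing $d_p(\psi(A),\psi(B))$ is the identity, as in Remark \ref{remark:sort}. Writing $g(x)=-x\log x$, which is continuous on the compact interval $[0,1]$ (with $g(0)=0$) and hence uniformly continuous there with some nondecreasing modulus of continuity $\omega$ satisfying $\omega(s)\to 0$ as $s\to 0^{+}$, and using that $|p_i^a-p_i^b|\le d_p(\psi(A),\psi(B))$ for every $i$, I would estimate
\[
|E(A)-E(B)| \;=\; |E(\psi(A))-E(\psi(B))| \;=\; \Big|\,\sumi\big(g(p_i^a)-g(p_i^b)\big)\Big| \;\le\; \sumi \omega\big(|p_i^a-p_i^b|\big) \;\le\; n_{\max}\,\omega\big(d_p(\psi(A),\psi(B))\big).
\]

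Finally I would assemble the $\ee$--$\delta$ statement: given $\ee>0$, choose $s_0>0$ with $n_{\max}\,\omega(s_0)\le\ee$ (possible since $\omega(s)\to 0$), and set $\delta=\dfrac{L_{\min}\,s_0}{4\,n_{\max}^{1-\frac1p}}$; then $d_p(A,B)\le\delta$ gives $d_p(\psi(A),\psi(B))\le s_0$ and hence $|E(A)-E(B)|\le\ee$, which is the claim. The only genuinely delicate point is that $g(x)=-x\log x$ fails to be Lipschitz near $x=0$ (its derivative blows up), so one cannot replace $\omega$ by a linear bound; invoking uniform continuity of $g$ on the compact interval $[0,1]$ sidesteps this, at the price of not obtaining an explicit modulus — which is presumably why a sharper, quantitative stability estimate is treated separately in what follows.
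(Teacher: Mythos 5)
Your argument is correct, but it follows a different route from the paper. The paper's proof of Proposition \ref{continuity} is a one-line reduction: it invokes the continuity of persistent entropy with respect to the bottleneck distance already established in \cite{Noise} and extends it to every $d_p$ simply via the inequality $d_\infty(A,B)\leq d_p(A,B)$ of Corollary \ref{w-inequality}. You instead give a self-contained proof: you normalize through $\psi$ (Lemma \ref{ent_pro}), control $d_p(\psi(A),\psi(B))$ by Lemma \ref{Projection-inequality} using the fixed $n_{\max}$ and $L_{\min}$, and then conclude from the uniform continuity of $g(x)=-x\log x$ on $[0,1]$, which is essentially a qualitative version of the strategy the paper reserves for the quantitative bound of Theorem \ref{main_theo0} (there the modulus is made explicit via Theorem \ref{prob_ineq}). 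What each approach buys: the paper's proof is shorter but delegates the real work to an external reference, whereas yours is self-contained, exhibits an explicit $\delta$ in terms of a modulus of continuity, and makes visible exactly where the hypotheses ``$n_{\max}$ fixed'' and ``$L_{\min}$ fixed'' are needed. One small technical remark: when $n_a\neq n_b$ the padded zero-length intervals $[t,t]$ need not sit at the origin, so a matched pair involving a padded bar only satisfies $|p_i^a-p_i^b|\leq 2\,d_p(\psi(A),\psi(B))$ (as in Lemma \ref{0-inequality}) rather than the bound without the factor $2$ that you state; this is harmless for your $\varepsilon$--$\delta$ conclusion, since the constant can be absorbed into the choice of $s_0$.
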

	
	\begin{proof}
We have that 	$d_\infty(A,B)\leq d_p(A,B)$ by
	Lemma~\ref{w-inequality}. Since $d_p(A,B)\leq \delta$ then 
	$d_\infty(A,B)\leq \delta$ and  by \cite[Proposition 1]{Noise} 
	\begin{equation*}
	d_{\infty}(A,B) \leq \delta \then |E(A) - E(B)|\leq \ee,
	\end{equation*}
	concluding the proof.
	\end{proof}

The stability of Shannon entropy has been previously studied by Lesche in \cite{Lesche} for the $1$-norm due to its importance in physics. That bound can be slightly improved as shown in \cite{ Elements}.
	
	\begin{theorem}[\mbox{\cite[p.~664]{Elements}}] \label{prob_ineq}
		Let $P$ and $Q$ be  two finite probability distributions (seen as vectors in $\RR^{u}
		$), and let $E_S(P)$ and $E_S(Q)$ be, respectively, their  Shannon entropy. If $||P-Q||_1 \leq \frac{1}{2}$ then 
		$$
		|E_S(P)-E_S(Q)| \leq 
		||P-Q||_1\big(\log({u}
		) - \log(||P-Q||_1)\big).
		$$
	\end{theorem}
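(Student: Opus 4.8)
The plan is to deduce the bound from a one-dimensional estimate for the function $g(t) = -t\log t$ on $[0,1]$ (with the convention $g(0)=0$), followed by an elementary optimization. Write $\Delta_i = |p_i - q_i|$ and $\Delta = \|P-Q\|_1 = \sum_{i=1}^n \Delta_i$; we may assume $\Delta>0$, the case $\Delta=0$ being trivial. Since $E_S(P) = \sum_{i=1}^n g(p_i)$ and likewise for $Q$, the triangle inequality gives
\begin{equation*}
|E_S(P)-E_S(Q)| \leq \sum_{i=1}^n |g(p_i)-g(q_i)|,
\end{equation*}
so it is enough to bound each summand in terms of $\Delta_i$ and then sum.

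The key step is the pointwise inequality: if $x,y\in[0,1]$ and $|x-y|\leq 1/2$, then $|g(x)-g(y)|\leq g(|x-y|)$. I would prove it by setting $\delta=|x-y|$, assuming $x\geq y$, and splitting into two cases. If $g(x)\geq g(y)$, then concavity of $g$ together with $g(0)=0$ makes $g$ subadditive, so $g(x)=g(y+\delta)\leq g(y)+g(\delta)$, which is the claim. If instead $g(y)>g(x)$, I would show that $h(y):=g(y)-g(y+\delta)$ is increasing on $[0,1-\delta]$ — its derivative equals $\log\frac{y+\delta}{y}>0$ for $y\in(0,1-\delta)$ — hence $g(y)-g(x)\leq h(1-\delta)=g(1-\delta)$, and finally $g(1-\delta)\leq g(\delta)$ for $\delta\leq 1/2$, which reduces to checking $-\delta\log\delta+(1-\delta)\log(1-\delta)\geq 0$ on $[0,1/2]$. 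Because $\sum_i\Delta_i=\Delta\leq 1/2$ forces $\Delta_i\leq 1/2$ for every $i$, the pointwise inequality applies coordinatewise and yields $|E_S(P)-E_S(Q)|\leq \sum_{i=1}^n g(\Delta_i)$.

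To finish, I would bound $\sum_{i=1}^n g(\Delta_i)$ subject to $\sum_i \Delta_i = \Delta$. Putting $\mu_i = \Delta_i/\Delta$, so that $(\mu_i)_i$ is a probability vector on $n$ symbols, expand
\begin{equation*}
\sum_{i=1}^n g(\Delta_i) = \sum_{i=1}^n\big(-\Delta\mu_i\log\Delta - \Delta\mu_i\log\mu_i\big) = -\Delta\log\Delta + \Delta\Big({-}\sum_{i=1}^n\mu_i\log\mu_i\Big) \leq -\Delta\log\Delta + \Delta\log n,
\end{equation*}
using that the Shannon entropy of $(\mu_i)_i$ is at most $\log n$. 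Combining the two displays gives exactly $|E_S(P)-E_S(Q)|\leq \|P-Q\|_1\log n - \|P-Q\|_1\log\|P-Q\|_1$.

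I expect the pointwise inequality to be the only real obstacle, and inside it the case where one of $x,y$ lies near $1$: there $g(x)$ and $g(y)$ can differ substantially, subadditivity alone does not suffice, and one needs the monotonicity argument for $h$ together with — crucially — the hypothesis $\|P-Q\|_1\leq 1/2$, which is exactly what guarantees $g(1-\delta)\leq g(\delta)$. Everything else (the triangle inequality, subadditivity of $g$, and $H(\mu)\leq\log n$) is routine.
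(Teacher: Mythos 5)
Your proof is correct. Note that the paper does not prove this statement at all—it is imported verbatim from the cited reference [Elements] (Cover--Thomas, the $L_1$ bound on entropy)—and your argument is essentially the standard proof given there: the pointwise estimate $|g(x)-g(y)|\le g(|x-y|)$ for $g(t)=-t\log t$ and $|x-y|\le 1/2$ (handled, as you do, by concavity/subadditivity on one side and the monotone-difference plus $g(1-\delta)\le g(\delta)$ argument on the other), followed by writing $\sum_i g(\Delta_i)=-\Delta\log\Delta+\Delta H(\mu)\le \Delta\log n-\Delta\log\Delta$.
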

	Notice that the restriction $||P-Q||_1 \leq \frac{1}{2}$ is reasonable because $||P-Q||_1$ is at most $2$. 
	
	Now, let us introduce one of the main result of this paper. We can observe that since the space $\mathcal{B}_0 \cap \mathcal{B}_N$ can be interpreted as finite probability distributions, we can first project the  persistence barcodes of $\mathcal{B}_F$ onto $\mathcal{B}_0 \cap \mathcal{B}_N$ and then apply the previous theorem to obtain the desired stability result.
	
	\begin{theorem}[stability of persistent entropy]\label{main_theo0}
		Let $A,B \in \mathcal{B}_F$. 	Let us assume that	$r_p(A,B)\leq \frac{1}{4}$. Then: 
	$$
		|E(A)-E(B)| 
        \leq 2r_p(A,B)\big(\log
        (n_a+n_b) - \log
        (2r_p(A,B))\big).
        $$ 
     \end{theorem}

	\begin{proof}
		First, 
	by Remark~\ref{ent_pro}, we have that 
	\begin{align}\label{ec2}|E(A)-E(B)|=|E(\psi(A))-E(\psi(B))|.\end{align}
Now, 
let $\gamma_{\psi,1}$ be a bijection where $d_1(\psi(A), \psi(B))$ is reached, that is,
	 $$d_1(\psi(A), \psi(B))=\sum_{i=1}^{n_{\psi,1}}
	 \left|\frac{\ell_i^a}{L_a}-\frac{\ell_{\gamma_{\psi,1}(i)}^b}{L_b}\right|$$
	 where $n_{\psi,1}$ is the cardinality of $\gamma_{\psi,1}$.
		    Let $P$ be the vector $\left(\frac{\ell_1^a}{L_a},
	    \dots,\frac{\ell_{n_{\psi,1}}^a}{L_a}\right)$
	    and $Q$ the vector 
	 $\left(\frac{\ell_{\gamma_{\psi,1}(1)}^b}{L_b},
	    \dots,\frac{\ell_{\gamma_{\psi,1}(n_{\psi,1})}^b}{L_b}\right)$.  Then,
	  $||P-Q||_1=d_1(\psi(A), \psi(B))$
	   and by  Lemma~\ref{Projection-inequality}, 
	    $$||P-Q||_1=d_1(\psi(A), \psi(B))\leq  
	    2r_p(A,B).
	    $$
	    Now,     
	    since $r_p(A,B)\leq \frac{1}{4}$ then 
	    $||P-Q||_1\leq \frac{1}{2}$. By Theorem~\ref{prob_ineq}
	    we have that
	    	  \begin{equation}\label{eq:E}
	        (\ref{ec2})
	        \leq d_1(\psi(A), \psi(B))\big(\log
	        (n_{\psi,1}) - \log
	        d_1(\psi(A), \psi(B))\big)
	        \end{equation}
	   Now, since
	     $x\big(\log(n_{\psi,1}) - \log(x)\big)$
	    is increasing as long as $x\leq \frac{n_{\psi,1}}{e}$ and
	   $d_1\big(\psi(A)$, $\psi(B)\big)\leq  
	    2r_p(A,B)\leq \frac{1}{2}\leq \frac{n_{\psi,1}}{e}$ since $n_{\psi,1}\geq 2$ by assumption\footnote{See the comment after Notation~\ref{notation}.}, then  
	   \begin{align}\label{ec3}\mbox{(\ref{eq:E})}
	   \leq 2r_p(A,B)\big(\log(n_{\psi,1}) - \log(2r_p(A,B))\big)
	   \end{align}
        Finally, 
        $$ (\ref{ec3})
        \leq  2r_p(A,B)\big(\log
        (n_a+n_b) - \log
        (2r_p(A,B))\big)$$
        since $n_{\psi,1}\leq n_a+n_b$.
\end{proof}
	
	 Although the bound of $|E(A)-E(B)|$ can tend to $\infty$ for an arbitrary large 
	 $n$ for $n=n_a+n_b$,
	 the relative value 
	$\frac{|E(A)-E(B)|}{\log(n)}$ is bounded when $n$ tends to $\infty$ since $r_p(A,B)\leq \frac{1}{4}$. In other words,
	\begin{equation*}
	\lim_{n \rightarrow \infty}\; \sup_{\mathcal{B}_F} \left( \frac{|E(A) - E(B)|}{\log(n)} \right) = 2 r_p (A,B).\nn
	\end{equation*} 
	Table \ref{table:table1} shows some numerical examples regarding such relative value.
	
	\begin{table}[th]
	\caption{Bounds $\frac{2r_\infty(A,B)\big(\log
        (n) - \log
        (2r_\infty(A,B))\big)}{\log(n)}$ of relative values $\frac{|E(A)-E(B)|}{\log (n)}$ for different values of $n$  (columns) and relative errors $r_\infty(A,B)$ (rows).}\label{table:table1}
        \medskip
		\begin{center}
		    \scriptsize
			\begin{tabular}{|c||c|c|c|c|}
				\hline
				\rowcolor[gray]{0.8}
				& \multicolumn{4}{|c|}{\textbf{
				$r_{\infty}(A,B)$}}\\
				\hline 
				\cellcolor[gray]{0.8}$\mathbf{n}$&$\mathbf{0.1}$ & $\mathbf{0.05}$ & $\mathbf{0.025}$ & $\mathbf{0.01}$ \\
				\hline
				\hline
				\textbf{10} & 0.339794 & 0.2 & 0.115051 & 0.0539794 \\
				\textbf{510} & 0.251631 & 0.136933 & 0.0740258 & 0.0325498 \\
				\textbf{1010} & 0.246531 & 0.133285 & 0.0716526 & 0.0313102 \\
				\textbf{1510} & 0.243975 & 0.131457 & 0.070463 & 0.0306888 \\
				\textbf{2010} & 0.242321 & 0.130274 & 0.0696935 & 0.0302868 \\
				\textbf{2510} & 0.24112 & 0.129415 & 0.0691346 & 0.0299949 \\
				\textbf{3010} & 0.240187 & 0.128747 & 0.0687007 & 0.0297682 \\
				\textbf{3510} & 0.239431 & 0.128206 & 0.0683486 & 0.0295843 \\
				\textbf{4010} & 0.238798 & 0.127754 & 0.0680541 & 0.0294305 \\
				\textbf{4510} & 0.238256 & 0.127366 & 0.067802 & 0.0292988 \\
				\textbf{5010} & 0.237784 & 0.127028 & 0.0675823 & 0.029184 \\
				\hline
			\end{tabular}
		\end{center}
	\end{table}
	
	
	\subsection{Persistence barcodes with infinite length intervals}\label{subsec:infinity}
	
	In order to extend the definition of persistent entropy to persistence barcodes with infinite length intervals, it is common to define a projection from $\mathcal{B}$ to $\mathcal{B}_F$ that transforms  infinite length intervals into finite length intervals. There are many ways to do this
	and depending on choice, persistent entropy may no longer be stable or scale-invariant.
	In this section, we explain some projections and their properties.
	
	We start with a simple example. 
	To avoid  calculations involving the infinite value
	when computing persistent homology, usually, an upper bound is fixed and considered to be the infinite value. Then, if we want to compute persistent entropy, the first idea could be just to assign this upper bound to each of the infinite values that appear in the  infinite length intervals. 
	 
 	\begin{definition}[projection $\xi_c$] 
 		Let $c\in \RR$. Define the projection 
 		$\xi_c: {\cal B}\to {\cal B}_F$ such that  
 		for  $A = \{[x^a_i, y^a_i)\}\in {\cal B}$,
 		 $$\xi_c(A) = \{ [x^a_i, z^a_i) \}\;\mbox{ where $z_i^a = 
 		c$ if $y_i^a = \infty$ and $z_i^a=y_i^a$ otherwise.}$$
 	\end{definition}
  	The following result confirms that the projection $\xi_c$ is stable.
 	
	\begin{proposition}\label{project_inf}
		Let  
		$A,B \in \mathcal{B}$. Then, projection $\xi_c$ satisfies
		that
		$$
		d_p(\xi_c(A),\xi_c(B)) \leq d_p(A,B).$$
	\end{proposition}
	
	\begin{proof}
	Let $A=\{[x_i^a,y_i^a)\}$ and $B=\{[x_i^b,y_i^b)\}$. 
Let $\gamma_p$ be a bijection 
where $d_p(A,B) $ is reached. 
Let  $n_p$ denote the cardinality of $\gamma_p$.
		Observe that if 
		$y_i^a<\infty$
		and $y_{\gamma_p(i)}^b<\infty$
		then $|z_i^a-z_{\gamma_p(i)}^b|=|y_i^a-y_{\gamma_p(i)}^b|$.
			Nevertheless, if $y_i^a=\infty$ and   $y_{\gamma_p(i)}^b<\infty$ (resp.
		$y_i^a<\infty$ and   $y_{\gamma_p(i)}^b=\infty$)
		then $|z_i^a-z_{\gamma_p(i)}^b|<|y_i^a-y_{\gamma_p(i)}^b|=\infty$.
		Finally, if
		$y_i^a=y_{\gamma_p(i)}^b=\infty$ then 
		$|z_i^a-z_{\gamma_p(i)}^b|=|y_i^a-y_{\gamma_p(i)}^b|=0$.
		We conclude:
			\begin{align*}
			\big(d_p(\xi_c(A),\xi_c(B))\big)^p &=\min_{\gamma}\sum_{i=1}^{n_{\gamma}} \max\{|x_i^a-x_{\gamma(i)}^b|^p,|z_i^a-z_{\gamma(i)}^b|^p\}\\
			&\leq
		\sumi \max\{|x_i^a-x_{\gamma_p(i)}^b|^p,|z_i^a-z_{\gamma_p(i)}^b|^p\}
	\leq \big(d_p(A,B)\big)^p.
		\end{align*}
	\end{proof}	

	Despite being stable, $\xi_c$ is not scale-invariant. 
	  By definition, a projection $f: {\cal B}\to {\cal B}_F$ is scale-invariant if $f(\lambda A) = \lambda f(A)$, being $\lambda A$  the scalar multiplication of each of the intervals (notice that $\lambda\cdot\infty = \infty$).
	  We now define the following stable and scale-invariant projections from ${\cal B}$ to ${\cal B}_F$.

	 \begin{definition}[projections $\mu_{\lambda}$, $\nu_{\lambda,p}$, $\tau_{\lambda}$]\label{projection_definition}
	 	Let $\lambda \geq 0$ and $1\leq p \leq \infty$.
	 	Let $A= \{[x_i^a,y_i^a) \}\in\mathcal{B}$.  Then:
	 	\begin{itemize}
						\item $\mu_\lambda(A) = \{ [x^a_i, z^a_i) \}$
			where
		$z_i^a = 
			 x_i^a + \lambda\ell_{max}^a$ if $y_i^a = \infty$ and $z_i^a=y_i^a$ otherwise; being $\ell_{max}^a$  the maximum finite value for $\ell_i^a=y_i^a-x^a_i$.
			\item $\nu_{\lambda,p}(A) = \{ [x^a_i, z^a_i) \}$ where $z_i^a = 
			x_i^a + \lambda L_{a,p}$ if $y_i^a = \infty$ and $z_i^a=y_i^a$ otherwise; being 
	$	L_{a,p}=
	\left(\sum_{i\in I} (\ell_i^a )^p \right)^{1/p}$ where $I=\{i:\,1\leq i\leq n_a$ and $\ell_i^a <\infty\}$.
			\item 
		$\tau_\lambda(A) = \{ [x^a_i, z^a_i) \}$ where $z_i^a = 
			 (1+\lambda) u_a$ if $y_i^a = \infty$ 
			 	 and $z_i^a=y_i^a$ otherwise;
			 	being $u_a$ the maximum finite value for 			$y_i^a$.
	 	\end{itemize}
	 \end{definition}
	 Notice that $\mu_0 = \nu_{0,p}$
	 and both are equivalent to remove the infinite length intervals.

	 \begin{proposition}[stability of projections $\tau_{\lambda}$, $\mu_{\lambda}$, $\nu_{\lambda,p}$]\label{projection-stability}
	 	Given two persistence barcodes $A,B\in {\cal B}$ with the same number $m$ of  infinite length intervals, we have that:	 		 	\begin{align*}&d_p(\mu_\lambda(A),\mu_\lambda(B)) 
	 		\leq \big( 1+m2^p\lambda^p \big)^{1/p}  d_p(A,B);\\
	 		 		&d_p(\nu_{\lambda, p}(A),\nu_{\lambda, p}(B)) \leq \left( 1+m2^p\lambda^p \right)^{1/p} d_p(A,B).
	 		 		\end{align*}
	 		 		
If the length of the longest finite interval in $A$ and $B$ are both greater than $2 d_\infty(A,B)$, then
$$d_p(\tau_\lambda(A),\tau_\lambda(B)) 
	 		 		\leq \big(  1+m\left( 1+\lambda  \right)^p\big)^{1/p} d_p(A,B).
$$
	 \end{proposition}
	 
	 \begin{proof}
	 Sort the intervals of $A$ and $B$ such that 
	 	their first $m$ intervals are the infinite length intervals 
	 	and for $r=p,\infty$, consider a bijection $\gamma_r$ 
	 	where $d_r(A,B) $ is reached.
	 	Let $n_r$ denote the cardinality of $\gamma_r$.
	 	Let $f$ refer to $\tau_{\lambda}$, $\mu_{\lambda}$ or $\nu_{\lambda,p}$.
		 	We have:
	\begin{align*}
	 &\big(	d_p(f(A),f(B))\big)^p =  \min_{\gamma}\sum_{i=1}^{n_{\gamma}} \max \big\{ |x_i^a - x_{\gamma(i)}^b|^p, |z_i^a - z_{\gamma(i)}^b|^p \big\} \nn \\
	 	&\leq \sumi \max \big\{ |x_i^a - x_{\gamma_p(i)}^b|^p, |z_i^a - z_{\gamma_p(i)}^b|^p \big\} \\
	 &	= \sum_{i=1}^{m} \max \big\{ |x_i^a - x_{\gamma_p(i)}^b|^p, |z^a_i  - z^b_{\gamma_p(i)}|^p \big\} + \sum_{i=m+1}^{n_p} \max \big\{ |x_i^a - x_{\gamma_p(i)}^b|^p, |y_i^a - y_{\gamma_p(i)}^b|^p \big\} \nn \\
	 	&=  \sum_{i=1}^{m} \max \big\{ |x_i^a - x_{\gamma_p(i)}^b|^p, |z^a_i - z^b_{\gamma_p(i)}|^p \big\} + \big(d_p(A,B)\big)^p  - \sum_{i=1}^{m}  |x_i^a - x_{\gamma_p(i)}^b|^p \\
	 	&= \sum_{i=1}^{m} \max \big\{ 0, |z_i^a - z_{\gamma_p(i)}^b|^p - |x_i^a - x_{\gamma_p(i)}^b|^p \big\} + \big(d_p(A,B)\big)^p\\
	 		&=\sum_{i=1}^{m} \big(|z_i^a - z_{\gamma_p(i)}^b|^p - |x_i^a - x_{\gamma_p(i)}^b|^p  \big)+ \big(d_p(A,B)\big)^p.
	 	\end{align*}
	 	If $f = \mu_\lambda$ then, for all $i$, $1\leq i\leq m$, we have:
	 	\begin{align*}
	 		 	&|z^a_i - z^b_{\gamma_p(i)}|^p  - |x_i^a - x_{\gamma_p(i)}^b|^p = |\lambda \ell_{max}^a - x_i^a - \lambda\ell_{max}^b + x_{\gamma_p(i)}^b|^p  - |x_i^a - x_{\gamma_p(i)}^b|^p \\
	 		 	&\leq |x_i^a- x_{\gamma_p(i)}^b|^p + |\lambda\ell_{max}^a- \lambda\ell_{max}^b|^p  - |x_i^a - x_{\gamma_p(i)}^b|^p = \lambda^p|\ell_{max}^a- \ell_{max}^b|^p.
	 		 	\end{align*} 
Assume, without loss of generality, that $\ell_{max}^a \geq \ell_{max}^b$.
Then, the interval with length $\ell^b_*$ paired to an interval with length $\ell_{max}^a$  by bijection $\gamma_{\infty}$  satisfies, by definition, that $\ell^b_*\leq \ell_{max}^b \leq \ell_{max}^a$ and then	 
	 		 \[|\ell_{max}^a - \ell_{max}^b| \leq |\ell_{max}^a - \ell^b_*|\leq
	 		 	 2d_\infty(A,B), \] obtaining
	 		 	 	\[ |z^a_i - z^b_{\gamma_p(i)}|^p  - |x_i^a - x_{\gamma_p(i)}^b|^p  \leq 2^p\lambda^p \big(d_\infty(A,B)\big)^p \]
	 		 	and
	 		 	\begin{align*}
	 		 	    	 & \sum_{i=1}^{m} \big(|z_i^a - z_{\gamma_p(i)}^b|^p - |x_i^a - x_{\gamma_p(i)}^b|^p  \big)+ \big(d_p(A,B)\big)^p\\ &\leq  m2^p\lambda^p(d_\infty(A,B))^p + \big(d_p(A,B)\big)^p \leq ( m2^p\lambda^p+1)\big(d_p(A,B)\big)^p.
	 		 	\end{align*}
 		 		If $f = \nu_{\lambda, p}$ then,
 		 		for all $i$, $1\leq i\leq m$, we have: 
	\begin{align*}
 		 	&	|z^a_i - z^b_{\gamma_p(i)}|^p - |x_i^a - x_{\gamma_p(i)}^b|^p = \left| x_i^a + 
 		 		\lambda L_{a,p} 
 		 		 		- x_{\gamma_p(i)}^b - 
 		 		\lambda L_{b,p} 
 		 	 		 		\right|^p  - |x_i^a - x_{\gamma_p(i)}^b|^p \\
 		 		&\leq \left| 
 		 		\lambda L_{a,p} 
 		 	 		 		-\lambda L_{b,p} 
 		 	 		 		\right|^p + |x_i^a - x_{\gamma_p(i)}^b|^p - |x_i^a - x_{\gamma_p(i)}^b|^p = \lambda^p \left| 
 		 	L_{a,p} 
 		 				-L_{b,p} 
 		  		 		\right|^p.
 		 		\end{align*}
     		 	By the reverse triangle inequality:
 		 	$$
 		 	\lambda^p \left| 
 		 	L_{a,p} 
 		  		 	-L_{b,p}
 		  		 	\right|^p \leq 	
 		 	\lambda^p	\sum_{i=m+1}^{n_{\pi,p}}|\ell_ i^a-\ell_{\gamma_{\pi,p}(i)}^b|^p \leq
 		 	\lambda^p \big(d_p(\pi(A),\pi(B))\big)^p  		 		$$
 		 	being $n_{\pi,p}$ the cardinal of a bijection $\gamma_{\pi,p}$ where $d_p(\pi(A),\pi(B)) $ is reached.
By Lemma \ref{0-inequality},
 		 	$$\lambda^p\big(d_p(\pi(A),\pi(B))\big)^p \leq 2^p\lambda^p \big(d_p(A,B)\big)^p$$
 		 	and finally,
 $$	 \sum_{i=1}^{m} \big(|z_i^a - z_{\gamma_p(i)}^b|^p - |x_i^a - x_{\gamma_p(i)}^b|^p  \big)+ \big(d_p(A,B)\big)^p \leq ( m2^p\lambda^p+1)\big(d_p(A,B)\big)^p.
$$
If $f=\tau_\lambda$ then
	 	\begin{align*}
	 	&\sum_{i=1}^{m} \big( |z_i^a - z_{\gamma_p(i)}^b|^p - |x_i^a - x_{\gamma_p(i)}^b|^p \big) + \big(d_p(A,B)\big)^p\\
	 	&\leq \sum_{i=1}^{m} |z^a_i - z^b_{\gamma_p(i)}|^p + \big(d_p(A,B)\big)^p =
	 	(1+\lambda)^pm|u^a - u^b|^p
	 	+ \big(d_p(A,B)\big)^p.
	 	\end{align*}
	 	We only have to prove that $|u^a - u^b| \leq d_\infty(A,B)$. 
By reduction to the absurd, suppose that  $u^a - u^b > d_\infty(A,B)$.
	 		Without loss of generality, assume $u^a \geq u^b$.
	 	Take one interval $\alpha$ in $A$ with endpoint $u^a$ and another one 
	  in $B$ with endpoint $u^b$.
	 	Since, by hypothesis,  the length of both intervals is greater than $2d_\infty(A,B)$ then we can assume that they are not paired with the diagonal when computing the bottleneck distance. 
	  	Let $[x^b, y^b)$ be the interval in $B$ paired with $\alpha$. Then 
	 	\begin{equation*}
	 	u^a -  y^b \leq d_\infty(A,B) < u^a - u^b \then u^b < y^b
	 	\end{equation*}
	 leading to a contradiction. Therefore,
  	 	\begin{align*}
	 (1+\lambda)m |u^a - u^b|^p + \big(d_p(A,B)\big)^p 
 		 	& \leq  (1 + \lambda)^p m\big(d_\infty(A,B)\big)^p + \big(d_p(A,B)\big)^p \\
 		 	&\leq \left( (1 + \lambda)^pm+1\right) \big(d_p(A,B)\big)^p.
 	 	\end{align*}
	 \end{proof}
	 
Of course,  these projections are just a few of the many possible
that can be defined.
In the past,  since persistent entropy only takes into account the length of the intervals,  infinite length intervals were usually replaced by intervals of a fixed finite length. For example, in \cite{Piecewise}, $\tau_1$ was used plus a constant. With respect to this  case, notice that
	 adding a constant in the definition of any of the  projections above will produce  stable but not scale-invariant projections. In \cite{cells},  infinite length intervals were ignored using the stable and scale-invariant projection $\mu_0$ obtaining a topological based variable for analyzing cell arrangement.
	 
		\subsection{Stability results for ${\cal B}$} 
		
Let us now introduce the following results on the stability of
persistent entropy 
for the general case.
For simplicity, we have removed  infinite length intervals using $\mu_0$ for these statements, but we could use any other stable projection to remove such intervals.
	This way, the formulas that appear in the statements below would change according to the inequalities of  Proposition~\ref{projection-stability}.

\begin{theorem}\label{main_theo1}
			Let $K$ be a simplicial complex  and let $f,g : K \rightarrow \RR$ be two monotonic functions. Let $A,B \in \mathcal{B}$ be their corresponding persistence barcodes. 
		If $||f-g||_\infty \leq \frac{1}{8}\frac{L_{max}}{n} $
				then
					\begin{equation*}
			  |E(\mu_0(A)) - E(\mu_0(B))| \leq \frac{4 n ||f-g||_\infty }{ L_{\max}} \left( \log(n)  - \log \left( \frac{4 n ||f-g||_\infty }{ L_{\max}} \right) \right).
			\end{equation*}
		where $n=n_a + n_b$, being $n_a$ (resp. $n_b$)  the number of intervals of $\mu_0(A)$ (resp. $\mu_0(B)$).
			\end{theorem}
			\begin{proof}
		  First, 
		  	using 		    Corollary~\ref{cor:d}, we have that  $d_\infty(A,B) \leq ||f - g||_{\infty} $. 
Then,
$$r_{\infty}(\mu_0(A),\mu_0(B))=\frac{2d_{\infty}(A,B)n_\infty} {L_{max}}		   \leq \frac{2||f-g||_{\infty}n_\infty} {L_{max}}\leq \frac{1}{4}.
			$$
Therefore, by Theorem~\ref{main_theo0}, we have:
		   		 \begin{align*}
		|E(\mu_0(A)) - E(\mu_0(B))|
		 \leq 2r_{\infty}(\mu_0(A),\mu_0(B)) \left( \log(n
			)  - \log \left( 2r_{\infty}(\mu_0(A),\mu_0(B)) \right) \right).
			\end{align*}
			Since the function $x\big(\log(n)-\log(x)\big)$ is increasing as long as $x\leq\frac{n}{e}$
			and $\frac{1}{2}\leq \frac{n}{e}$ since $n\geq 2$ by assumption, 
			then
			\begin{align*}
			&2r_{\infty}(\mu_0(A),\mu_0(B)) \left( \log(n
			)  - \log \left( 2r_{\infty}(\mu_0(A),\mu_0(B)) \right) \right)\\
			&\leq
			\frac{4n ||f-g||_\infty  
			} {L_{max}} \left( \log(n
			)  - \log \left( \frac{4n ||f-g||_\infty 
			} {L_{max}} \right) \right).
		\end{align*}
		\end{proof}
		\begin{theorem}\label{main_theo2}
			Let $A,B$ be the persistence barcodes obtained respectively from $ Rips(X,t)|_{t\in\RR} $ and  $Rips(Y,t)|_{t\in\RR}$, being
			$(X,d_X)$ and $(Y,d_Y)$  two finite metric spaces.
		If $d_{GH}(X,Y) \leq \frac{1}{8} \frac {L_{max}}{n}$ then,	  
			\begin{equation*}
			  |E(\mu_0(A)) - E(\mu_0(B))| \leq \frac{4 nd_{GH}(X,Y)}{L_{max}} \left( \log(n)  - \log \left( \frac{4 n d_{GH}(X,Y)}{L_{max}} \right) \right).
			\end{equation*}
			where $n=n_a+n_b$, being $n_a$ (resp. $n_b$)  the number of intervals of $\mu_0(A)$ (resp. $\mu_0(B)$).
		\end{theorem}
		\begin{proof}
		Using  Theorem~\ref{Est2} we have that  $d_\infty(A,B) \leq d_{GH}(X,Y)$.
		As in the proof of Theorem~\ref{main_theo1} since 
		$d_{GH}(X,Y) \leq\frac{1}{8} \frac {L_{max}}{n}$ and the function $x\big(\log(n)-\log(x)\big)$ is increasing as long as $x\leq\frac{n}{e}$ then, by Theorem~\ref{main_theo0}, we obtain the desired result.
		\end{proof}
		
		It seems appropriate now to recapitulate the results of this section before moving on. As shown in the following diagram, at the beginning of the section we wanted to prove implication (A). In order to do it, we separated the problem into three parts ((1), (2) and (3)):
		
		\vspace{0.2cm}  
		\begin{center}
		\scriptsize
			\begin{tikzcd}
				\begin{tabular}{|c|}
					\hline
					Small 
					perturbations\\ in
					input data \\
					\hline
				\end{tabular}
				\arrow{rr}{(A)}&&
				\begin{tabular}{|c|}
					\hline
					Small perturbations\\ in
					persistent entropy \\
					\hline
				\end{tabular}
			\end{tikzcd}
		\end{center}	
		
	\begin{center}
	\scriptsize
			\begin{tikzcd}[column sep=tiny]
				\begin{tabular}{|c|}
					\hline
					Small \\
					perturbations\\ in 
					input data \\
					\hline
				\end{tabular}
				\arrow{rr}{(1)}&&
				\begin{tabular}{|c|}
					\hline
					Small perturbations\\
					in GH-distance\\
					or  filter function\\
					\hline
				\end{tabular}\arrow[d,"(2)"]\\
				&&
				\begin{tabular}{|c|}
					\hline
					Small \\
					perturbations \\
					in
					$(\mathcal{B}, d_p)$\\
					\hline
				\end{tabular}
				\arrow{rr}{(3)}&&
				\begin{tabular}{|c|}
					\hline
					Small 
					perturbations \\
					in persistent 
					entropy \\
					\hline
				\end{tabular}
			\end{tikzcd}
		\end{center}				
		\vspace{0.2cm}  
		Implication (1) is given by the formalization of the problem and implication (2) is given by Theorem \ref{Est1} and Theorem \ref{Est2} mentioned in the background section. The proof of implication (3) is the main aim of this section (Theorem \ref{main_theo0}). Putting all together we obtain Theorem \ref{main_theo1} and Theorem \ref{main_theo2}.

	  \section{Entropy-based summary functions}    
	
As we have already mentioned, numbers summarizing 
persistence barcodes  (such as persistent entropy)
are very useful to perform statistical tests. Nevertheless, if we want to perform a classification task, their discriminatory power might not be enough. One of the possible solutions is to 
summarize 
persistence barcodes using functions.
Summary functions (such as 
he already mentioned 
persistence silhouettes, 
Euler characteristic curves,
topological intensity maps 
or  
persistence landscapes)
have been used in the past to obtain statistical information from persistence barcodes. 
		For example, a simple but effective way of summarizing a persistence barcode is the {\it Betti  curve} defined as follows: If $A = \{[x_i^a, y_i^a)\}\in {\cal B}$ then	\[
			\beta(A)[t] = \mbox{cardinality of }
			\{ [x_i^a, y_i^a) :  x_i^a \leq t \leq y_i^a\}.
		\]
	That is, 
		$\beta(A)(t)$  is  the number of intervals in $A$ which are ``alive'' at time $t$.

In this section, we will define a  new summary piece-wise constant function (also known as step function).
It is  similar to the Betti curve but uses
		  persistent entropy instead of Betti numbers. We will prove its stability and show examples where such function measures different features of the persistence barcode than the Betti curve. Besides, and contrary to  what happened with persistent entropy, we will see that the normalization of this function is also stable. 
	
	\subsection{Entropy summary function (ES-function)}
	
	We now define a new function that pairs a persistence barcode $A \in \mathcal{B}_F$ with a real-valued piece-wise constant function.
	This new function summarizes information about the number of intervals of a given persistence barcode and their homogeneity and, as we will prove at the end of this subsection, is stable with respect to the bottleneck distance. 

	\begin{definition}[ES-function] 
		 The entropy summary function (ES-function) of a 
		  persistence  barcode $A = \{ [x_i^a, $ $y_i^a) \}_{1\leq i\leq n_a}$ in $\mathcal{B}_F$ is the 
		  real-valued
		  piecewise linear function:
		\begin{equation*}
		S(A)[t] =-\sum_{i=1}^{n_a} w^a_i(t) \frac{\ell_i^a}{L_a} \log \left( \frac{\ell_i^a}{L_a} \right)\nn
		\end{equation*}
		where
		$w_i^a(t) =1$ 	if $x_i^a\leq t \leq y_i^a$ and 
		$w_i^a(t) =0$ otherwise.         
	\end{definition}
	In other words, the ES-function  pairs  a persistence barcode $A=\{[x_i^a,y_i^a)\}$ 
		and an instant $t$
		with the partial sum of $E(A)$ corresponding to the intervals 
		$[x^a_i,y^a_i)$
		of $A$  that are ``alive'' at that moment $t$, that is, 
		 $x^a_i \leq  t \leq  y^a_i$.
		See Figure \ref{figure:ejemplo_ES}.
	Notice that $S(A): \mathbb{R} \rightarrow \mathbb{R}$ and  $S: \mathcal{B}_F \rightarrow \mathcal{C}$, being
	$\mathcal{C}$  the space of real-valued
	piece-wise constant functions.

\begin{figure}[!htb]
   	\begin{center}
   	    \includegraphics[width=.75\textwidth]{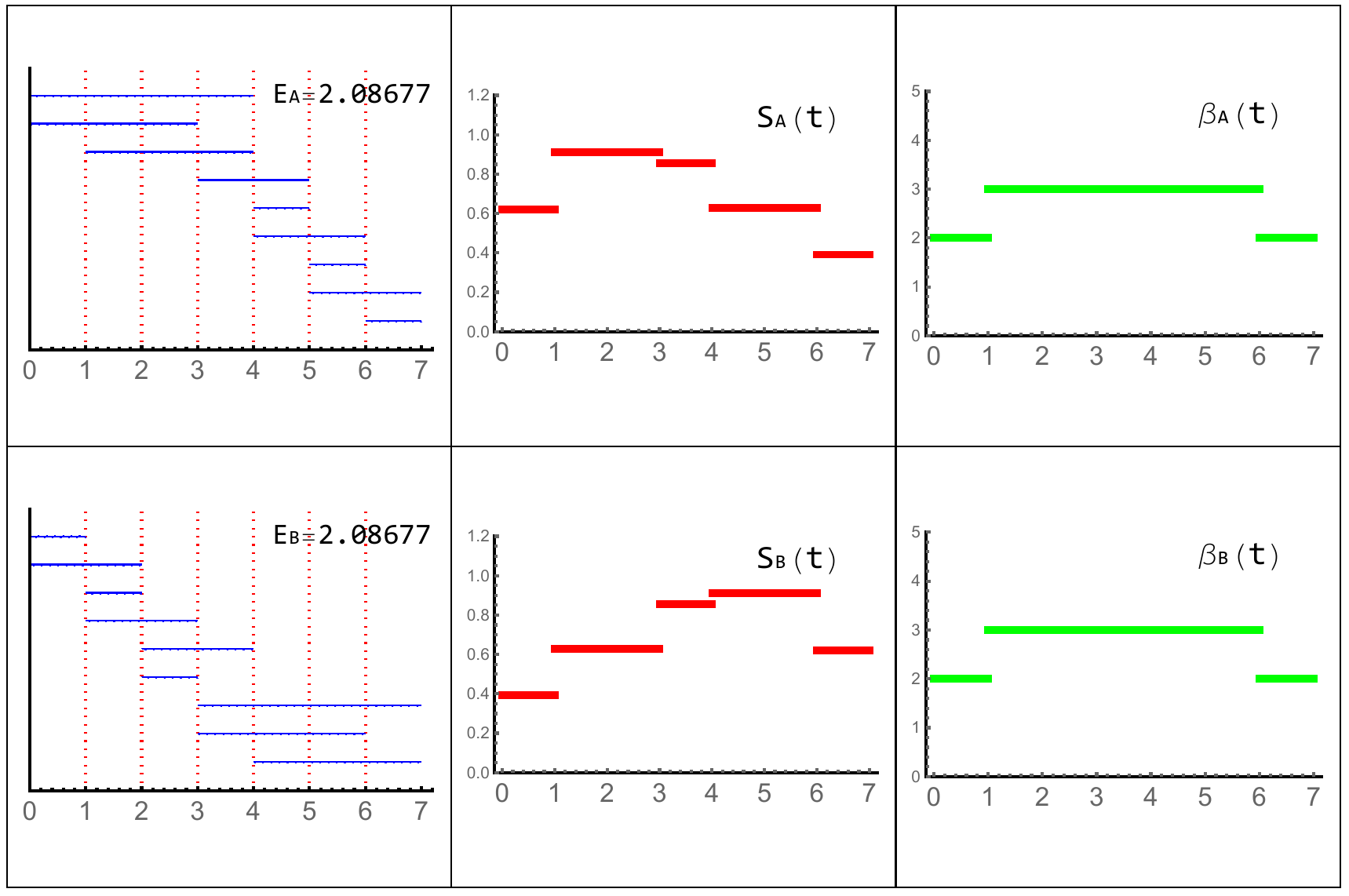}
   	\end{center}
 	\caption{In this example we can see two different persistence barcodes for which their  Betti  curves and persistent entropy are the same but not their ES-function.}\label{figure:ejemplo_ES}
      \end{figure}

	The following result states that the ES-function is stable with respect to the bottleneck distance.

	\begin{theorem}[stability of the ES-function]\label{Es-Stability}
		Let $S$ be the ES-function, $d_\infty$ the bottleneck distance and $A,B$  two persistence barcodes in $\mathcal{B}_F$. 
	Let $n_{\infty}$ be the cardinality of a bijection,  denoted as $\gamma_{\infty}$, where $d_{\infty}(A,B)$ is reached. 
If $r_{\infty}(A,B) \leq \frac{2}{3e}$ then:
$$		||S(A)- S(B)||_1 \leq r_\infty(A,B)L_{max}\left( \frac{\log n_{\max}}{n_{\max}}-\frac{3}{2}\log \Big(\frac{3}{2}r_\infty(A,B) \Big)
			 \right).	
$$
	\end{theorem}

		\begin{figure}[!htb]
   		\begin{tabular}{c c}
 \includegraphics[height=3.5cm]{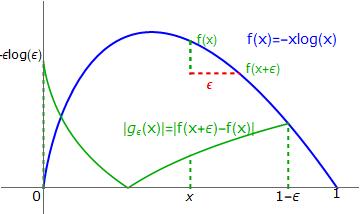}
 &
  \includegraphics[height=3.5cm]{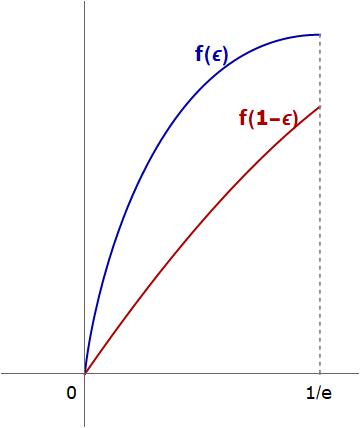}
   	\end{tabular}
 	\caption{On the left, 
 	$|g_{\epsilon}(x)|$ 
 	is pictured in green and
 	$f(x)$
 	in blue. 
  	On the right, observe that $f(\epsilon)> f(1-\epsilon)$ 
  	for  $\epsilon\in(0,1/e)$, so $|g_{\epsilon}(x)|$ attains
 	the global  maximum on $[0,1-\epsilon]$ 
 	at $x=0$.
 	}\label{figure:fig_xlogx}
      \end{figure}
     
	\begin{proof}
		Denote the expression $\frac{\ell_i^a}{L_a} \log \left( \frac{\ell_i^a}{L_a} \right)$ by $s_i^a$. Then,
		\begin{align*}
			&||S(A) - S(B)||_1  
		=\bigg|\bigg| \sum_i^{n_{\infty}} w^a_is_i^a - w^b_{\gamma_{\infty}(i)}s^b_{\gamma_{\infty}(i)} \bigg|\bigg|_1\\
		&=\bigg|\bigg| \sum_i^{n_{\infty}} \big(w^a_iw^b_ {\gamma_{\infty}(i)} + w^a_i(1-w^b_ {\gamma_{\infty}(i)})\big)s_i^a   
		-\big(w^b_ {\gamma_{\infty}(i)}w^a_i + w^b_ {\gamma_{\infty}(i)}(1-w^a_i)\big)s^b_{\gamma_{\infty}(i)} \bigg|\bigg|_1 \nn \\
		&= \bigg|\bigg| \sum_i^{n_{\infty}} w^a_iw^b_ {\gamma_{\infty}(i)}(s^a_i - s^b_ {\gamma_{\infty}(i)})
			+ w^a_i(1-w^b_ {\gamma_{\infty}(i)})s_i^a 
					-  w^b_ {\gamma_{\infty}(i)}(1-w^a_i)s^b_ {\gamma_{\infty}(i)} \bigg|\bigg|_1 \nn \\
		&\leq \sum_i^{n_{\infty}} \big|\big| w^a_iw^b_ {\gamma_{\infty}(i)}\big|\big|_1 |s^a_i - s^b_ {\gamma_{\infty}(i)}|  
				+ \big|\big| w^a_i(1-w^b_ {\gamma_{\infty}(i)})s_i^a \big|\big|_1 
					+ \big|\big| w^b_ {\gamma_{\infty}(i)}(1-w^a_i)s^b_ {\gamma_{\infty}(i)} \big|\big|_1.\nn
		\end{align*}
		Let us now compute a bound for $ \sum_{i=1}^{n_{\infty}} \big|\big| w^a_iw^b_ {\gamma_{\infty}(i)}\big|\big|_1 |s^a_i - s^b_ {\gamma_{\infty}(i)}|$. 
		Without loss of generality, assume  that $L_a\geq L_b$.	Since $\big|\big| w^a_iw^b_ {\gamma_{\infty}(i)}\big|\big|_1$ represents the length of the intersection of non-null paired intervals, we have that
		\begin{align}\label{complicado27}
	\sum_i^{n_{\infty}} || w^a_i
	w^b_ {\gamma_{\infty}(i)} ||_1 \leq L_a.
		\end{align}
      Let us compute a bound for $|s^a_i - s^b_ {\gamma_{\infty}(i)}|$. Denote  the expression $	\left|\frac{\ell_i^a}{L_a} - \frac{ \ell^b_{\gamma_{\infty}(i)}}{L_b}\right|$
      by $\epsilon$.  By Lemma \ref{lema-tecnico} and Lemma \ref{Projection-inequality},  we have that 
        \begin{align}\label{complicado3}
		\epsilon 
		&\leq  \frac{ \big|\ell^a_i - \ell_{\gamma_{\infty}(i)}^b\big|}{L_a} + \frac{ \ell^b_{\gamma_{\infty}(i)}d_1(\pi(A),\pi(B))}{L_a L_b}  \nn \\
		&\leq \frac{2\max \{ |x_i^a - x^b_ {\gamma_{\infty}(i)}|, |y_i^a - y^b_ {\gamma_{\infty}(i)}| \}}{L_a}  + \frac{ L_b d_1(\pi(A),\pi(B))}{L_a L_b} \nn \\ 
		&\leq \frac{2d_{\infty}(A,B)}{L_a}  + \frac{2 d_1(A,B)}{L_a}
		\leq \frac{2d_{\infty}(A,B)}{L_a}  + \frac{2n_{\infty} d_{\infty}(A,B)}{L_a} \nn \\
		&\leq \frac{(2+2n_{\infty}) d_{\infty}(A,B)}{L_a} \leq  \frac{3n_{\infty} d_{\infty}(A,B)}{L_a} 
		=
		\frac{3}{2}r_\infty(A,B).
		\end{align}
		By hypothesis, we have that $r_{\infty}(A,B) \leq \frac{2}{3e}$, so  $ \frac{3}{2}r_\infty(A,B)\leq \frac{1}{e}$. 
		Then, $\epsilon \leq \frac{1}{e}$.
		Now, recall that  $f(x) = -x\log x$ is continuous and concave in $[0,1]$ with $f(0)=f(1)=0$ and consider the function $g_{\epsilon}(x)=f(x + \epsilon) - f(x)$ defined for any $x\in[0,1-\epsilon]$. 	As $f$ is concave and differentiable in $(0,1)$, $f'$ is decreasing, $g'_{\epsilon}$ is negative and so $g_{\epsilon}$ is decreasing  monotone.
		Therefore the maximum of its absolute value is attained at one of the extreme points of the interval $[0,1-\epsilon]$. Observe that since  $\epsilon < \frac{1}{e}$, 
		this maximum is reached at
		$x=0$ (see Figure~\ref{figure:fig_xlogx}):
		
		\begin{align}\label{complicado2}\begin{array}{ll}
		    |g_{\epsilon}(x)|&=|f(x + \epsilon) - f(x) |\leq \max\{ |f(\epsilon) - f(0)|, |f(1) - f(1-\epsilon)| \} \\
		    & = \max\{ -\epsilon\log(\epsilon), -(1-\epsilon)\log(1-\epsilon)\} = -\epsilon \log(\epsilon).
		    \end{array}
		\end{align}
 Now, using (\ref{complicado2}), we obtain that:
      
      	 \begin{align}\label{complicado1}
		|s^a_i - s^b_ {\gamma_{\infty}(i)}| =\left| \frac{\ell_i^a}{L_a}\log\left(\frac{\ell_i^a}{L_a}\right) - \frac{\ell^b_ {\gamma_{\infty}(i)}}{L_b}\log\left(\frac{\ell^b_ {\gamma_{\infty}(i)}}{L_b}\right) \right| 
		\leq   -\epsilon \log(\epsilon).
		\end{align}
Due to (\ref{complicado3}) and $f(x)=-x\log x$ being increasing in $[0,\frac{1}{e}]$, 
we obtain from (\ref{complicado1}) that:
		 \begin{align}\label{complicado28}
		|s^a_i - s^b_ {\gamma_{\infty}(i)}| \leq   
	 -\dfrac{3}{2}r_\infty(A,B) \log \Big( \dfrac{3}{2}r_\infty(A,B) \Big).
	 \end{align}
		Finally, by (\ref{complicado27}) and (\ref{complicado28}), we have:
		\begin{align}
		\sum_i^{n_{\infty}} \big|\big| w^a_iw^b_ {\gamma_{\infty}(i)}\big|\big|_1 |s^a_i - s^b_ {\gamma_{\infty}(i)}| \leq -\frac{3}{2}
		L_a r_\infty(A,B) \log \left( \frac{3}{2}r_\infty(A,B) \right). \label{addend1}
		\end{align}  	        
		Now, let us compute a bound for $$\sum_i^{n_{\infty}} \big|\big| w^a_i(1-w^b_ {\gamma_{\infty}(i)})s_i^a \big|\big|_1 + \big|\big| w^b_ {\gamma_{\infty}(i)}(1-w^a_i)s^b_ {\gamma_{\infty}(i)} \big|\big|_1.$$ Consider the function $w^b_ {\gamma_{\infty}(i)}(1-w^a_i)$. 	Its integral gives  the 
	``period of time'' in which the 
		${\gamma_{\infty}(i)}$-th interval of $B$, $[x^b_ {\gamma_{\infty}(i)}, y^b_ {\gamma_{\infty}(i)})$, is ``alive'' and the $i$-th interval of $A$, $[x_i^a, y_i^a)$, is not. 
		This might happen in both the initial and the end of the intervals.
		Therefore, if  $\epsilon_i = \max \{ |x_i^a - x^b_ {\gamma_{\infty}(i)}|, |y_i^a - y^b_ {\gamma_{\infty}(i)}| \}$ then:
		\begin{equation*}
		\int_\mathbb{R} w^b_ {\gamma_{\infty}(i)}(t)(1-w^a_i(t))\leq 2\epsilon_i.
		\end{equation*} 
		We also have that, in the case where $\epsilon_i < \int_\mathbb{R} w^b_ {\gamma_{\infty}(i)}(t)(1-w^a_i(t)) dt$,
		the period of time  where the $i$-th interval of $A$ is alive and the one of $B$ is not, is null. Therefore
		\begin{equation*}
		\epsilon_i < \int_\mathbb{R} w^b_ {\gamma_{\infty}(i)}(t)(1-w^a_i(t)) dt  \then \int_\mathbb{R} w^a_i(t)(1-w^b_ {\gamma_{\infty}(i)}(t)) dt = 0
		\end{equation*} 
		and vice-versa. Using both previous statements and that $\sum_{i=1}^{n_a} s_i^a = E(A)$ we can deduce:
		\begin{align*}
		&\sum_i^{n_{\infty}}s_i^a \bigg|\bigg| w^a_i(1-w^b_ {\gamma_{\infty}(i)}) \bigg|\bigg|_1+ s^b_ {\gamma_{\infty}(i)} \bigg|\bigg| w^b_ {\gamma_{\infty}(i)}(1-w^a_i) \bigg|\bigg|_1 \nn \\
		&\leq \sum_i^{n_{\infty}} s_i^a\int_{\mathbb{R}} w^a_i(1-w^b_ {\gamma_{\infty}(i)}) + s^b_ {\gamma_{\infty}(i)} \int_{\mathbb{R}} w^b_ {\gamma_{\infty}(i)}(1-w^a_i) 
		\nn \\ 
		&
		\leq \max\bigg\{ \sum_i^{n_{\infty}}  \epsilon_i(s_i^a + s^b_ {\gamma_{\infty}(i)}), \sum_i^{n_{\infty}} 2\epsilon_i s_i^a, \sum_i^{n_{\infty}} 2\epsilon_i s^b_ {\gamma_{\infty}(i)}\bigg\}  \nn \\
		&
		\leq \max\bigg\{ \max_{i}\{\epsilon_i\} \left(\sum_i^{n_{\infty}}  s_i^a + \sum_i^{n_{\infty}} s^b_ {\gamma_{\infty}(i)} \right), 2\max_{i}\{\epsilon_i\} \sum_i^{n_{\infty}}  s_i^a, 2\max_{i}\{\epsilon_i\}\sum_i^{n_{\infty}}  s^b_ {\gamma_{\infty}(i)} \bigg\}  \nn \\
				&
		= \max\bigg\{ \max_{i}\{\epsilon_i\} \left(\sum_i^{n_a}  s_i^a + \sum_i^{n_b} s^b_i \right), 2\max_{i}\{\epsilon_i\} \sum_i^{n_a}  s_i^a, 2\max_{i}\{\epsilon_i\}\sum_i^{n_b}  s^b_{i}\bigg\}  \nn \\
		& \leq  d_\infty(A,B) \max \big\{E(A) + E(B), 2E(A), 2E(B)\big\} 
		\leq r_{\infty}(A,B)\frac{L_{a}}{n_{\max}} \log( n_{\max}).
		\end{align*}
From this last equation and \eqref{addend1}, we obtain the desired result. 		
	\end{proof}

Notice that the  ES-function is based on persistent entropy whereas the Betti  curve  consists of  counting the number of ``alive'' intervals. Both functions (the ES-function and the Betti  curve) are continuous with respect to the bottleneck distance if the maximum number of intervals is fixed. Nevertheless, the ES-function is expected to perform better than the Betti curve in a noisy context since persistent entropy is stable while counting the number of intervals is not, even if it is continuous. 

\subsection{Normalized entropy summary function (NES-function)}
	
	One of the main aims of persistent homology is to represent the shape of the input data. In some applications, like image analysis or material science (see \cite{Material-Review} for a review), it may be important to detect some repetitive pattern independently of the size of the input dataset. 
	A possible tool to do this is a normalized version of the summary function, in order to try to capture the shape of the space and not the size.
	
	\begin{definition}[NES-function]
		The normalized entropy summary function 
		(NES-function)
		of 
 a persistence barcode $A = \{ [x_i^a, y_i^a] \}_{1\leq i\leq n_a}$ in $\mathcal{B}_F$ is defined as:
		\begin{equation*}
		NES(A)[t] = \frac{S(A)[t]}{||S(A)||_1}.\nn
		\end{equation*}
	\end{definition}
	
 Like the ES-function, this function is also stable.

	\begin{theorem}[Stability of the NES-function]
		Under the same hypothesis as in Theorem \ref{Es-Stability}, we have that:
\begin{align*}
		||NES(A) - & NES(B)||_1 
		\leq \dfrac{r_\infty(A,B)L_{max}\left( \frac{\log n_{\max}}{n_{\max}}-\frac{3}{2}\log \left[\frac{3}{2}r_\infty(A,B) \right]
			 \right)}{min\left\{||S(A)||_1, ||S(B)||_1\right\}}.
		\end{align*}
	\end{theorem}
	\begin{proof} First, observe that
	\begin{align*}
	&	\left| \left| \dfrac{S(A)}{||S(A)||_1} - \dfrac{S(B)}{||S(B)||_1} \right|\right|_1 = \dfrac{\big|\big| ||S(B)||_1 S(A) - ||S(A)||_1 S(B) \big|\big|_1}{||S(A)||_1||S(B)||_1} \\
		&\leq \dfrac{\max\{||S(A)||_1 ||S(B)||_1 \}(||S(A) - S(B)||_1)}{||S(A)||_1||S(B)||_1} = \dfrac{||S(A) - S(B)||_1}{\min\{ ||S(A)||_1,||S(B)||_1\}}.
	\end{align*}
	Apply Theorem~\ref{Es-Stability} to bound $||S(A) - S(B)||_1$ obtaining the desired result.
	\end{proof}
	
	\section{Experimentation}
	
Our summary functions have been recently applied to real-world 
data 
	in \cite{skin} and \cite{lawson}. The authors of these papers explicitly mentioned the stability of these functions, presented here and in the arXiv version \cite{preprint}, to guarantee the robustness of their methods. In \cite{skin} an algorithm is developed for segmenting and classifying different types of skin lesions in a given skin image. 
The database used is the International Skin Imaging Collaboration (ISIC) dataset\footnote{https://www.isic-archive.com} consisting of 10015 skin lesson images.
 In the classification part, the authors
first computed the persistence diagram 
for each channel of the image in different color space. Then, 
  topological features including our ES-function (referred as ``persistent entropy curves'') and Betti curves were calculated from persistence diagrams obtained previously.
  Finally,
the authors used multi-class support vector machine with a ``one-against-one''
strategy.
The best 3 scores they had on the validation set was 65.6\%,
66\%, and 67.2\% depending on the color space used. 
Besides, in \cite{lawson}, the authors developed a general framework called ``persistence curves'' for vectorizing persistence diagrams  inspired in our ES-function, referred in  \cite{lawson} as ``life entropy'', denoted by $le$, and used 
for texture classification on four different texture datasets. They showed that a combination of different flavors of persistence curves (including our ES-function) produces the best result, showing that they are complementary to each other.

This section is devoted to experiments. In Subsection \ref{USC-SIPI experiment} we  
study
how the NES-function and similar vectorizations such as the
Betti curves and the persistence silhouettes
may benefits from each other in machine learning tasks.
The dataset considered in our experiment consists of miscellaneous real-world images taken from the USC-SIPI Image Database\footnote{http://sipi.usc.edu/database/database.php?volume=misc}. The machine learning method used in our experiment is the random forest technique \cite{RF}.
Later, in Subsection \ref{subsec:3.2}, we will also add persistence images to our experiment to see how different vectorization methods perform depending on the nature of the data sets. In particular, we will consider the Flickr Material Database (FMD)\footnote{https://people.csail.mit.edu/celiu/CVPR2010/FMD/}.
	The whole experiment has been developed in Python and R. Scripts and notebooks can be downloaded from here\footnote{https://github.com/Cimagroup/New-Summary-Function-For-TDA}. Libraries used are scikit-TDA\footnote{https://scikit-tda.org/} (in particular Persim\footnote{https://persim.scikit-tda.org/} and Ripser\footnote{https://ripser.scikit-tda.org/}), scikit-image\footnote{https://scikit-image.org/}, scikit-learn\footnote{https://scikit-learn.org/} and R-TDA-package\footnote{https://cran.r-project.org/web/packages/TDA/index.html}.
	
\subsection{USC-SIPI dataset experiment}\label{USC-SIPI experiment}

\begin{figure}[ht]
	\centering
	\includegraphics[width = .75\textwidth]{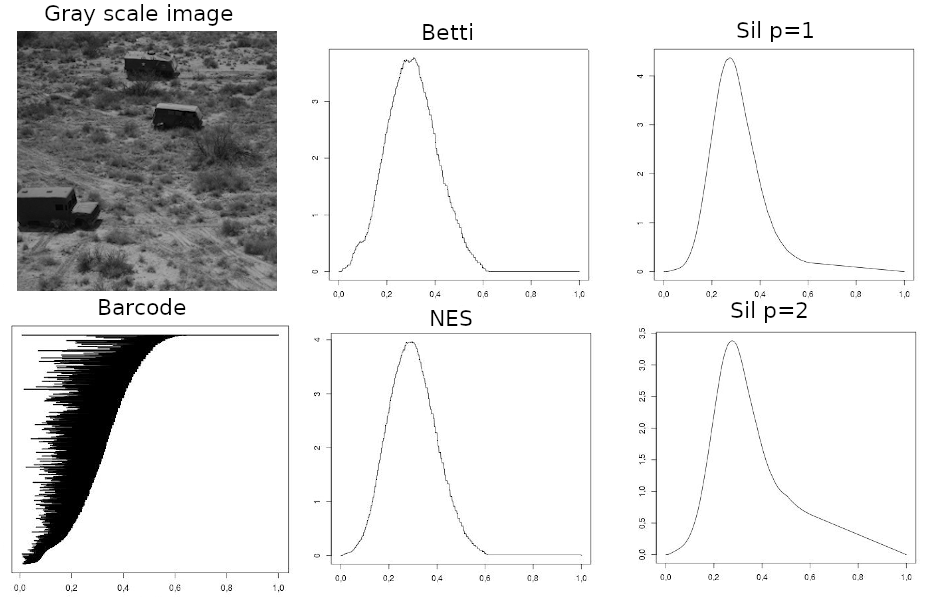}
	\caption{Left: a gray scale image from the USC-SIPI Image Database and its associated $0$-th persistence barcode used to compute the summary functions. Center: the Betti curve and the NES-function. Right: The persistence silhouettes for $p=1$ and $p=2$. Observe that, in this example, outputs are extremely similar.}
	\label{fig:29}
\end{figure}
\begin{figure}[ht]
	\centering
	\includegraphics[width = .75\textwidth]{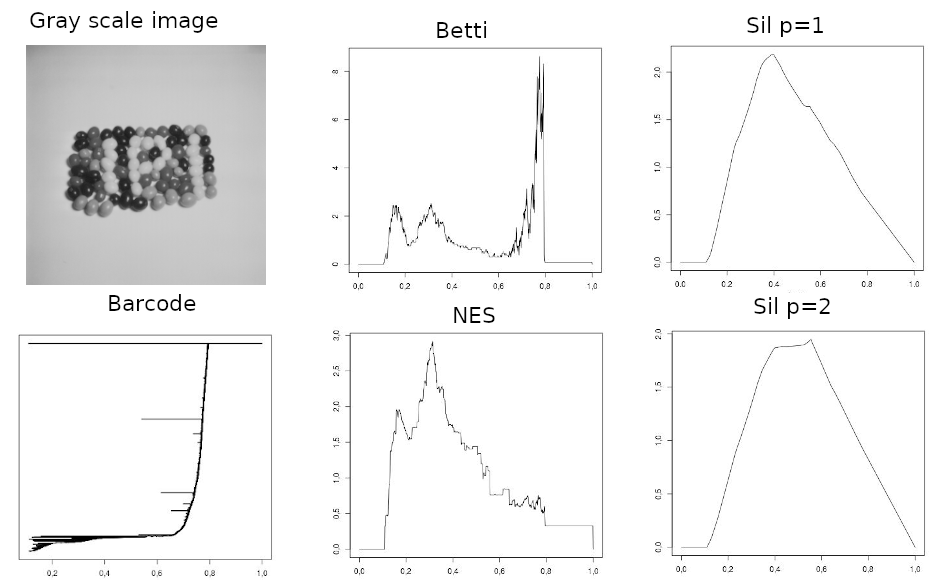}
	\caption{An image producing very different output functions}
	\label{fig:06}
\end{figure}

\begin{table}[th]
	\centering
	\caption{Average of the
		$L_1$-distance of the selected summary functions on
		the $0$-th persistence barcodes associated to the images from USC-SIPI Image Database and 
		a noisy version of them. Note that the maximum possible value  is 2 since all summary functions used have been normalized.}
	\label{table:robustness}
	\medskip
	{\scriptsize
	\begin{tabular}{|l|l|l|l|l|}
		\hline
		& Betti curve    & NES-function       & Silhouette $p=1$    & Silouette $p=2$    \\ \hline
		Gauss   & 0.198 & 0.114 & 0.039 & 0.024 \\ \hline
		Poisson & 0.245 & 0.200 & 0.156 & 0.086 \\ \hline
		S\&P    & 0.140 & 0.296 & 0.424 & 0.459 \\ \hline
	\end{tabular}
	}
\end{table}

\begin{table}[th]
	\centering
	\caption{Images in the database are quite different to each other so these functions are expected to discriminate them. This table shows the average of the 
		$L_1$-distance of the considered summary functions  on all the images of the given dataset.
		Note that the maximum possible value  is 2 since all summary functions used have been normalized.}
	\label{table:differences}
		\medskip
		{\scriptsize
	\begin{tabular}{|l|l|l|l|}
		\hline
		Betti  curve   & NES-function       & Silhouette $p=1$   & Silhouette $p=2$   \\ \hline
		0.913 & 0.672 & 0.500 & 0.354 \\ \hline \end{tabular}}
\end{table}

We have followed this procedure:
\begin{itemize}
	\item[] Step 1.Transform the images to gray scale.
	\\
	Step 2. Add  Gaussian, Poisson and salt-and-pepper noise to the data.
	\\
	Step 3. Compute persistence diagrams using lower-start filtration.
	\\
	Step 4. Summarize the diagrams using some vectorization method: the Betti curve, the NES-function or the persistence silhouettes.
\end{itemize}
Let us compare the results obtained. First, the computational time was very similar to obtain
all of them so we have omitted it from the analysis. A first conclusion is that for some images like the one in Figure~\ref{fig:29}, the outputs have  extremely similar shapes while in others, like the one in Figure~\ref{fig:06}, the outputs have completely different shapes, showing that these functions may provide complementary information regarding different aspects of the same image. Robustness results have been obtained computing the $L_1$-distance
of each summary function on clean and noisy images,
see Table~\ref{table:robustness}. Observe that the results obtained for the NES-function is always between those obtained  for the Betti curve and the persistence silhouettes. The Betti curve only performs better for salt-and-pepper noise. This fact is expected since persistence diagrams together with  the bottleneck distance are unstable to salt-and-pepper noise when they are calculated using the lower-star filtration. It tells us that the NES-function and the persistence silhouettes are more robust to stastistical noise than the Betti curve. However, the Betti curve is more robust to impulsive noise. 

We have also pairwise compared  all the clean images for each summary function using the 
$L_1$-distance and, as expected, since all images in the database are different in nature, this value is  high, see Table~\ref{table:differences}.

We can conclude that the NES-function is usually more discriminative than the persistence silhouettes but less than
the Betti curve. Nevertheless, these functions may provide information regarding different aspects of the same image and therefore may complement each other as it is shown in the next subsection.

\subsection{FMD dataset experiment}\label{subsec:3.2}

First, we have transformed the given color images to gray scale images and computed their lower-start filtration. In a first part of this experiment, we have computed the Betti curve, the NES-function and the persistence silhouette for $p = 1$. Materials in the FMD database are classified in ten categories: 0-foliage, 1-glass, 2-leather, 3-metal, 4-paper, 5-plastic, 6-stone, 7-water, 8-wood and 9-fabric. Then, we have applied the random forest technique to classify the images using the output of the summary functions. Results are shown in Table~\ref{table:bsn}. Note that the one that performs best is the combination of the Betti curve and the NES function. Besides, the persistence silhouettes also improve their performance when combined with the NES-function.

\begin{table}[h!]
	\centering
	\caption{Accuracy of the classification of the FMD database using the random forest technique  and the summary functions selected. Note that both, the Betti curve and the persistence silhouettes improve when combined with the NES-function.}
	\label{table:bsn}
		\medskip
		{\scriptsize
\begin{tabular}{|l|l|l|l|l|l|l|}
	\hline
	Betti                       & Silhouette   $p=1$              & Nes                       & B + S                      & B + N                              & S + N                     & B + S + N                 \\ \hline
	\multicolumn{1}{|c|}{0.285} & \multicolumn{1}{c|}{0.185} & \multicolumn{1}{c|}{0.24} & \multicolumn{1}{c|}{0.285} & \multicolumn{1}{c|}{\textbf{0.29}} & \multicolumn{1}{c|}{0.25} & \multicolumn{1}{c|}{0.265} \\ \hline
\end{tabular}
}
\end{table}

In a second part of this experiment, we want to illustrate that, depending on the data, some methods may perform better than others. This is the reason why this time we do not combine functions. The idea is to try to distinguish categories in the database when compared pairwise. We have performed 45 tests, one for each pair of materials. We have added another popular vectorization method: persistence images (with $20\times 20$ pixels). Results are shown in Table~\ref{table:pw} where best marks have been highlighted.

\begin{table}[h!]
	\centering
	\caption{We perform a classification task for each pair of categories in FMD. It can be checked that persistence images and the Betti curve usually performs better but the persistence silhouettes and the NES-function may outperform the Betti curve and the persistence images in some cases, concluding that the information provided by each function are complementary and depend on the data.}
	\label{table:pw}
		\medskip
		{\scriptsize
	\begin{tabular}{l|lllllllll|}
		\cline{2-10}
		& \multicolumn{1}{l|}{0vs1} & \multicolumn{1}{l|}{0vs2} & \multicolumn{1}{l|}{0vs3} & \multicolumn{1}{l|}{0vs4} & \multicolumn{1}{l|}{0vs5} & \multicolumn{1}{l|}{0vs6} & \multicolumn{1}{l|}{0vs7} & \multicolumn{1}{l|}{0vs8} & \multicolumn{1}{l|}{0vs9} \\ \hline
		\multicolumn{1}{|l|}{B}  & 0.6                       & 0.675                     & 0.7                       & 0.6                       & 0.575                     & 0.725                     & 0.675                     & 0.8                       & 0.775                     \\ \cline{1-1}
		\multicolumn{1}{|l|}{S}  & \textbf{0.625}             & 0.525                       & 0.55                     & 0.45                      & 0.55                      & 0.75                     & 0.575                     & 0.775                      & 0.675                       \\ \cline{1-1}
		\multicolumn{1}{|l|}{N}  & 0.5                       & 0.6                       & 0.7                       & 0.5                       & 0.45                      & 0.725                     & 0.7                       & 0.8                       & 0.775          \\ \cline{1-1}
		\multicolumn{1}{|l|}{PI} & 0.55                      & \textbf{0.725}            & \textbf{0.8}              & \textbf{0.675}            & \textbf{0.6}              & \textbf{0.775}            & \textbf{0.875}            & \textbf{0.85}             & \textbf{0.825}            \\ \hline
		& \multicolumn{1}{l|}{1vs2} & \multicolumn{1}{l|}{1vs3} & \multicolumn{1}{l|}{1vs4} & \multicolumn{1}{l|}{1vs5} & \multicolumn{1}{l|}{1vs6} & \multicolumn{1}{l|}{1vs7} & \multicolumn{1}{l|}{1vs8} & \multicolumn{1}{l|}{1vs9} & \multicolumn{1}{l|}{2vs3} \\ \hline
		\multicolumn{1}{|l|}{B}  & 0.8                       & 0.55                      & 0.75                      & \textbf{0.75}             & 0.725                     & 0.575                     & 0.65                      & 0.575                     & 0.475                     \\ \cline{1-1}
		\multicolumn{1}{|l|}{S}  & 0.825                     & 0.65                     & \textbf{0.8}                       & 0.6                     & 0.75                       & 0.475                       & \textbf{0.775}                     & 0.6                     & 0.45                     \\ \cline{1-1}
		\multicolumn{1}{|l|}{N}  & \textbf{0.85}             & 0.6                       & 0.775            & \textbf{0.75}             & 0.775                     & \textbf{0.625}            & 0.675            & \textbf{0.675}            & 0.5                       \\ \cline{1-1}
		\multicolumn{1}{|l|}{PI} & \textbf{0.85}             & \textbf{0.775}            & 0.775            & 0.675                     & \textbf{0.875}            & 0.575                     & 0.625                     & \textbf{0.675}            & \textbf{0.6}              \\ \hline
		& \multicolumn{1}{l|}{2vs4} & \multicolumn{1}{l|}{2vs5} & \multicolumn{1}{l|}{2vs6} & \multicolumn{1}{l|}{2vs7} & \multicolumn{1}{l|}{2vs8} & \multicolumn{1}{l|}{2vs9} & \multicolumn{1}{l|}{3vs4} & \multicolumn{1}{l|}{3vs5} & \multicolumn{1}{l|}{3vs6} \\ \hline
		\multicolumn{1}{|l|}{B}  & \textbf{0.825}            & 0.7                       & \textbf{0.7}              & 0.525                     & \textbf{0.65}             & 0.65                      & \textbf{0.725}            & \textbf{0.925}            & \textbf{0.825}            \\ \cline{1-1}
		\multicolumn{1}{|l|}{S}  & 0.725                       & 0.575                     & 0.5                    & 0.5            & 0.6                       & 0.55                   & 0.675                      & 0.65                      & 0.6                      \\ \cline{1-1}
		\multicolumn{1}{|l|}{N}  & 0.725                     & 0.65                      & 0.55                      & 0.475                     & \textbf{0.65}             & \textbf{0.75}             & 0.7                       & 0.85                      & 0.625                     \\ \cline{1-1}
		\multicolumn{1}{|l|}{PI} & 0.8                       & \textbf{0.8}              & 0.625                     & \textbf{0.625}            & 0.575                     & 0.7                       & 0.65                      & 0.775                     & 0.775                     \\ \hline
		& \multicolumn{1}{l|}{3vs7} & \multicolumn{1}{l|}{3vs8} & \multicolumn{1}{l|}{3vs9} & \multicolumn{1}{l|}{4vs5} & \multicolumn{1}{l|}{4vs6} & \multicolumn{1}{l|}{4vs7} & \multicolumn{1}{l|}{4vs8} & \multicolumn{1}{l|}{4vs9} & \multicolumn{1}{l|}{5vs6} \\ \hline
		\multicolumn{1}{|l|}{B}  & \textbf{0.7}              & 0.625                     & 0.725                     & 0.55                      & \textbf{0.825}            & 0.725                     & 0.625                     & 0.55                      & 0.65                      \\ \cline{1-1}
		\multicolumn{1}{|l|}{S}  & 0.625                     & 0.6                     & 0.675            & \textbf{0.625}                     & 0.7                     & 0.65                    & 0.45                      & 0.6                     & 0.6                     \\ \cline{1-1}
		\multicolumn{1}{|l|}{N}  & \textbf{0.7}              & 0.65                      & 0.725                     & 0.55                      & 0.75                      & 0.725                     & 0.6                       & \textbf{0.65}             & \textbf{0.725}            \\ \cline{1-1}
		\multicolumn{1}{|l|}{PI} & 0.6                       & \textbf{0.725}            & 0.675                     & \textbf{0.625}            & 0.775                     & \textbf{0.775}            & \textbf{0.65}             & 0.625                     & 0.6                       \\ \hline
		& \multicolumn{1}{l|}{5vs7} & \multicolumn{1}{l|}{5vs8} & \multicolumn{1}{l|}{5vs9} & \multicolumn{1}{l|}{6vs7} & \multicolumn{1}{l|}{6vs8} & \multicolumn{1}{l|}{6vs9} & \multicolumn{1}{l|}{7vs8} & \multicolumn{1}{l|}{7vs9} & \multicolumn{1}{l|}{8vs9} \\ \hline
		\multicolumn{1}{|l|}{B}  & \textbf{0.85}             & 0.7                       & \textbf{0.8}              & \textbf{0.9}              & \textbf{0.825}            & \textbf{0.8}              & \textbf{0.875}            & 0.65                      & \textbf{0.9}              \\ \cline{1-1}
		\multicolumn{1}{|l|}{S}  & 0.7                     & 0.7                     & 0.625                      & 0.625                      & 0.675                     & 0.675                     & 0.7                     & 0.7                       & 0.575                      \\ \cline{1-1}
		\multicolumn{1}{|l|}{N}  & 0.7                       & 0.675                     & 0.675                     & 0.775                     & 0.775                     & 0.725                     & 0.7                       & 0.7                       & 0.625                     \\ \cline{1-1}
		\multicolumn{1}{|l|}{PI} & 0.825                     & \textbf{0.725}            & 0.55                      & 0.7                       & 0.775                     & 0.675                     & 0.725                     & \textbf{0.725}            & 0.625                     \\ \hline \cline{1-1}
	\end{tabular}}
\end{table}

	
	\section{Conclusions and future work}
In this paper, the stability of persistent entropy is provided justifying its application as an useful statistic in topological data analysis. What is more, persistent entropy has been used to define an stable summary function, the ES-function, and its normalised version, the NES-function. We have shown that, in general, they perform better than the Betti  curve in noisy context and 
that
they can be useful for machine learning tasks.

Several types of persistence curves inspired in our persistent entropy summaries were also defined by Y.M Chung and A. Lawson in \cite{lawson}. They also corroborate the idea that such persistence functions are somehow complementary and combined provide a classification performance comparable to the state of the art. Together with Y.M. Chung and A. Lawson we plan to apply  our summary functions to other higher dimensional  non-image dataset such as, for example, the TOSCA dataset of 3D meshes\footnote{http://tosca.cs.technion.ac.il/book/resources\_data.html}. Besides, as a future work,
it would be interesting to 
deeply
compare these and other summary functions with other topological vectorization methods
existing in the literature.
\\
\noindent{\bf Acknowledgments:} We would like to thank the reviewers for their very valuable comments and suggestions. The third author has been partially funded by VI-PPITUS (University of Seville).

	\bibliographystyle{ieeetr}
	\bibliography{references}
	
\end{document}